\title{Generic Single Edge Fault Tolerant Exact Distance Oracle}
\author{Manoj Gupta\\
   IIT Gandhinagar, India\\
   {\small\texttt{gmanoj@iitgn.ac.in}}
 \and
Aditi Singh\\
   IIT Gandhinagar, India\\
{\small\texttt{aditi.singh@iitgn.ac.in}}
}
\newtheorem{theorem}{Theorem}
\newtheorem{lemma}[theorem]{Lemma}
\newtheorem{definition}[theorem]{Definition}
\newtheorem{corollary}[theorem]{Corollary}
\newif\iflong
\newcommand{\TT}{\mathcal{T}}
\newcommand{\conc}{\diamond}
\newcommand{\DET}{\textsc{Detour}}
\newcommand{\UNQ}{\textsc{Unique}}
\newcommand{\FF}{\textsc{First}}
\newcommand{\LL}{\textsc{Last}}
\newcommand{\SBFS}{$\sigma$-\textsc{BFS}}
\newcommand{\BFS}{\textsc{BFS}}
\newcommand{\TZE}{\mathcal{R}_1}
\newcommand{\TON}{\mathcal{R}_1}
\newcommand{\TTW}{\mathcal{R}_2}
\newcommand{\RR}{\mathcal{R}}
\newcommand{\BP}{\mathcal{B}}
\newcommand{\INT}{\textsc{Int}}
\newcommand{\BST}{\textsc{Bst}}
\newcommand{\RMQ}{\textsc{Rmq}}
\newcommand{\HE}{\textsc{Heavy}}
\newcommand{\LI}{\textsc{Light}}
\newcommand{\QQ}{\textsc{Q}}
\begin{document}

\maketitle

\begin{abstract}
 Given an undirected  unweighted graph $G$ and a source set $S$ of $|S| = \sigma
$ sources, we want to build a data structure which can process
the following query {\sc Q}$(s,t,e):$ find the shortest distance from $s$ to $t$ avoiding an edge $e$,  where $s \in S$ and
$t \in V$. When $\sigma=n$, Demetrescu, Thorup, Chowdhury and Ramachandran (SIAM Journal of Computing, 2008) designed an algorithm
with $\tilde O(n^2)$ space\footnote{$\tilde
O(\cdot)$ hides poly $\log n$ factor.} and $O(1)$ query time. A natural open question is to
generalize this result to any number of sources. Recently,
Bil{\`o} et. al. (STACS 2018) designed a data-structure of size $\tilde O(\sigma^{1/2}n^{3/2})$ with the query time of $O(\sqrt{n\sigma})$ for the above problem.  We improve their result by designing
a data-structure of size $\tilde O(\sigma^{1/2} n^{3/2})$ that can answer queries in $\tilde O(1)$ time.

In a related problem of finding fault tolerant subgraph, Parter and Peleg (ESA 2013) showed that
if  detours of  {\em replacement} paths ending at a vertex $t$  are disjoint,
then the number of such paths is $O(\sqrt{n\sigma})$.
This eventually gives a bound of $O( n \sqrt{n \sigma}) = O(\sigma^{1/2}n^{3/2})$
for their problem. {\em Disjointness of detours} is a very crucial property used
in the above result. We show a similar result for a subset of replacement path which
\textbf{may not} be disjoint. This result is the crux of our paper and may be
of independent interest.

\end{abstract}

\newpage
\section{Introduction}
Real life graph networks like communication network or road
network are prone to link or node failure. Thus, algorithms
developed for these networks must be resilient to failure.
For example, the shortest path between two nodes may change
drastically even if a single link fails. So, if the problem
forces us to find shortest paths in the graph, then it should
find the next best shortest path after a link failure. There
are many ways to model this process: one of them is {\em
fault-tolerant graph algorithm}. In this model,  we have to preprocess a graph $G$  and
build a data-structure that can compute a property of the
graph after any $k$ edges/vertices of the graph have failed.
 Note the difference between this model and
{\em dynamic graph model}. In a dynamic graph algorithm,
 we have to {\em maintain} a property of a continuously
changing graph. However,
in the fault tolerant model, we expect the failure to be
repaired readily and
restore our original graph.

In this paper, we study the shortest path problem in the
fault tolerant model. Formally, we are given an undirected
and unweighted graph $G$ and a source
set $S$ of $|S| = \sigma
$ sources. We want to build a data structure which can process
the following query {\sc Q}$(s,t,e):$ find the shortest
distance from $s$ to $t$ avoiding an edge $e$,  where $s
\in S$ and
$t \in V$. Such a data-structure is also called a {\em distance
oracle}.
When there are $n$ sources, Demetrescu et al. \cite{DemetrescuTCR08}
designed an oracle that can find the shortest path between
any two vertices
in $G$ after a single vertex/edge failure in $\tilde O(n^2)$
space and $O(1)$
query time. Recently,
Bil{\`o} et. al. \cite{BiloCGLP17} generalized this result
to any number of sources. They designed a data-structure
of size $\tilde
O(\sigma^{1/2}n^{3/2})$ with the query time of $O(\sqrt{n\sigma})$
for the above problem.

To understand our problem, we should also understand a closely
related problem of finding
{\em fault tolerant subgraph}.
Here, we have to find a subgraph of $G$ such that BFS
tree from $s \in S$ is preserved in the subgraph after any
edge
deletion.  In an unweighted graph, a BFS tree preserves
the shortest path from $s$ to all vertices in $G$.
Parter and Peleg \cite{ParterP13} showed that a subgraph
of
size $O(\sigma^{1/2} n^{3/2})$ is both necessary and sufficient
to
solve the above problem. The above result indicates that
there should
be a better fault-tolerant distance oracle for any value
of $\sigma$.

Inspired by this result,
we generalize the result of \cite{DemetrescuTCR08} to any
number of sources --
by showing that there exists a distance oracle of size $\tilde
O(\sigma^{1/2} n^{3/2})$
which can answer queries in $\tilde O(1)$ time. Note that
our result nearly matches
the space bound achieved by Parter and Peleg\cite{ParterP13}
-- up to poly$\log n$ factors.
We now state the main result of this paper formally:
\begin{theorem}
\label{thm:maintheorem}
  There exists a data-structure of size $\tilde O(\sigma^{1/2}n^{3/2})$
  for multiple source single fault tolerant exact distance
oracle
  that can answer each query in $\tilde O(1)$ time.
\end{theorem}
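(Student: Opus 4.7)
The plan is to follow the standard decomposition: for each source $s \in S$, build the BFS tree $T_s$ together with constant-time ancestor and LCA machinery so that $d(s,t)$ is available in $O(1)$ and the test whether a failing edge $e$ lies on the $T_s$-path $T_s[s,t]$ takes $O(1)$. If $e \notin T_s[s,t]$ the answer is simply $d(s,t)$. Otherwise the replacement path $P(s,t,e)$ decomposes as $T_s[s,a] \conc D \conc T_s[b,t]$, where $D$ is a \emph{detour} that leaves $T_s[s,t]$ at $a$ and re-enters at $b$; the oracle must recover $d(s,a) + |D| + d(b,t)$ for every valid $(s,t,e)$.

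For each sink $t$, consider the family of canonical detours that arise as $(s,e)$ ranges over all queries. Parter and Peleg already show that the sub-family with pairwise \emph{disjoint} detours has size $O(\sqrt{n\sigma})$. I would then prove the structural lemma advertised in the abstract: even for a broader sub-family whose detours are not required to be disjoint, the count is still $O(\sqrt{n\sigma})$. Summing over $t$ gives $O(n\sqrt{n\sigma}) = O(\sigma^{1/2} n^{3/2})$ \emph{canonical} replacement paths, and any other replacement distance can be reconstructed from the closest canonical one. To obtain $\tilde O(1)$ query time, I would index the canonical objects by $(s,t)$, storing for each pair the breakpoints on $T_s[s,t]$ at which the canonical detour changes in a balanced search tree keyed by depth; a query reduces to a predecessor lookup plus constant-time arithmetic involving $d(s,a)$, $d(b,t)$, and $|D|$.

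The main obstacle is the structural lemma itself. The Parter--Peleg disjointness argument charges each detour a private set of edges and fails once detours overlap. To extend the bound, I would introduce a notion of \emph{representative}, for instance the first non-tree edge taken by the detour or the highest $T_s$-ancestor it reaches, and charge each canonical detour to its representative; a careful pigeonhole should then convert the old edge-counting bound into a representative-counting bound of the same order. A smaller but nontrivial difficulty is engineering the preprocessing so that the canonical detours and their interval decompositions can actually be computed within the claimed space, which I expect to handle by a Dijkstra-style sweep from each source that retains only the $\tilde O(\sqrt{n\sigma})$ survivors per sink and cross-indexes them per source, combined with heavy-light decomposition on each $T_s$ to keep the interval search trees succinct.
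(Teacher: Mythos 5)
Your outline correctly identifies the crux --- bounding, for a fixed sink $t$, the number of replacement paths whose detours are allowed to intersect --- but it leaves that lemma unproved, and the charging scheme you sketch does not obviously work. Charging each detour to its ``first non-tree edge'' or ``highest $T_s$-ancestor it reaches'' gives no control on collisions: many mutually intersecting detours can share the same first non-tree edge and the same apex, so the pigeonhole does not convert into a count. The paper's argument is of a different character. It works with \emph{preferred} replacement paths in a perturbed graph (diverging from the $st$ path exactly once, as close to $s$ as possible), which forces all such paths for a fixed segment into a total order by length (Lemmas \ref{lem:avoids}, \ref{lem:avoidreverse}, \ref{lem:length}); it then defines, for each path $P$, the prefix $\UNQ(P)$ of its detour that avoids the detours of all longer paths, and shows that the first time $|\UNQ(P)|$ drops below $\sqrt{n/\sigma}$, a length comparison between $P$ and the longer path it first intersects caps the number of remaining paths in that segment at $O(\sqrt{n/\sigma})$. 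In the multi-source setting this comparison can fail outright (the intersecting path may re-enter and cross the avoided edge, Figure \ref{fig:multiple}); the paper must introduce the good/bad dichotomy and prove Lemma \ref{lem:badpaths} (each path makes at most one other path bad) to salvage the bound. None of this is recoverable from a representative-counting argument as stated.

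A second gap: you assert that every non-canonical replacement distance ``can be reconstructed from the closest canonical one,'' but this is where most of the remaining machinery lives. The paper samples terminals with probability $\sqrt{\sigma/n}$, stores explicitly the $\tilde O(\sqrt{n/\sigma})$ replacement paths per pair for failures near $t$ (the near case), and, for far failures whose replacement path \emph{does} pass through the last terminal $t_s$, builds a generalization of the Demetrescu et al.\ oracle (the structures $B_3$--$B_5$, indexed by powers of two) to answer in $O(1)$. Only the residual class --- far failures avoided by a path that also avoids $t_s$ --- is covered by the counting lemma, and even there the query algorithm needs the $\TON$/$\TTW$ split per segment of \SBFS($t$), heavy-light decomposition with range-minimum queries over $\TON$, and the $I_1,I_2$ structures for locating the first intersection vertex. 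Without the sampling and the pass-through-$t_s$ reduction, the family of paths you must count is not the one the lemma bounds, and the $\tilde O(1)$ query time for the easy cases is unsupported.
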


This generalization turns out to be much more complex than
the result in \cite{DemetrescuTCR08}.
Indeed, the techniques used by Demetrescu et al. \cite{DemetrescuTCR08}
are also used
by us to weed out {\em easy replacement} paths. To take
care of other paths,
we take an approach similar to Parter and Peleg\cite{ParterP13}.
They used the following trick:
if the {\em detour} of replacement paths are {\em disjoint},
then the number of such
paths can be bounded easily by a {\em counting argument}.
The main challenge is
then to show that paths in question are indeed disjoint
-- which is also easy
in their problem. We use a technique similar to above --
however, our paths are not disjoint,
they {may intersect}.  We believe that this technique can
be of independent interest and may be used in solving
closely related fault tolerant subgraph problems.

\iflong
\else
\vspace{-2mm}
\fi
\subsection{Related Work}
Prior to our work, the work related to fault tolerant
distance oracle was limited to two special cases, $\sigma
=1$ or $\sigma = n$.
As stated previously, Demetrescu et al. \cite{DemetrescuTCR08}
designed a single fault
tolerant distance oracle of size $\tilde O(n^2)$ with a
query time of $O(1)$.
The time to build the data-structure is $O(mn^2)$ --  which
was improved to
$O(m n \log n)$ by Bernstein and Karger \cite{BernsteinK09}.
The above result also works for a directed weighted graph.
Pettie and Duan \cite{DuanP09} were able to extend this
result to two vertex faults.
The size and query time of their distance oracle is $\tilde
O(n^2)$ and $\tilde O(1)$
respectively.
If the graph is weighted, then Demetrescu et al. \cite{DemetrescuTCR08}
showed that
there exists a graph in which a single vertex fault tolerant
distance oracle will
take $\Omega(m)$ space.
Recently, Bil{\`o} et. al. \cite{BiloCGLP17} designed the
following data-structure:  for every $S,T \subseteq V$,
a data-structure of size $\tilde O(n \sqrt{|S||T|})$ and
query time $O(\sqrt{|S||T|})$, where the query asks for
the shortest distance from $s \in S$ to $t \in T$ avoiding
any edge. If $|S| = \sigma$ and $|T| = n$, then the size
of their data-structure is $\tilde O(\sigma^{1/2}n^{3/2})$
and the query time is $O(\sqrt{n\sigma})$.

The next set of results are not {\em exact} but {\em approximate},
that is, they return
an approximate distance (by a multiplicative {\em stretch}
factor)
between two vertices after an edge/vertex fault. Also, these
oracles work  for a  single source only. Baswana and Khanna
\cite{KhannaB10} showed
that a 3-stretch  single source single fault tolerant distance
oracle of size $\tilde O(n)$ can be built in
$\tilde O(m + n)$ time and a constant query time.
Bil{\`{o}} et. al.\cite{BiloGLP16} improved the above result:
a distance oracle with stretch 2 of size
$O(n)$ and $O(1)$ query time.
In another result, Bil{\`{o}} et. al. \cite{BiloG0P16}
designed a $k$ fault
tolerant distance oracle of size $ \tilde O(kn)$ with a
stretch factor of $(2k+1)$
that can answer queries in $\tilde O(k^2)$ time. The time
required to construct this
data-structure is $ O(kn\alpha(m, n))$, where $\alpha(m,n)$
is the inverse of the Ackermann's function.
If the graph is unweighted,
then Baswana and Khanna\cite{KhannaB10}  showed that a $(1+\epsilon)$-stretch
single source
fault tolerant distance oracle of size $\tilde O(\frac{n}{\epsilon^3})$
can be built in $O(m \sqrt{n/\epsilon})$ time
and  a constant query time.  Bil{\`{o}} et. al \cite{BiloGLP16}
extended this result for
weighted graph by designing a distance oracle with stretch
$(1+\epsilon)$ of size
$O(\frac{n}{\epsilon} \log \frac{1}{\epsilon} )$ and a logarithmic
query time.

There is another line of work, called the {\em replacement
path} problem. In this problem,   we are given a source
$s$ and destination $t$ and for each edge $e$ on the shortest
 $st$ path, we need to find shortest $s$ to $t$ path
avoiding $e$. The problem can be generalized to finding
$k$ shortest $s$ to $t$ path avoiding $e$. The main goal
of this problem is to find all  shortest paths as fast
as possible.
Malik et al. \cite{MalikMG89}  showed that in an undirected
graphs, replacement paths can be computed in
 $O(m + n \log n)$ time. For
 directed, unweighted graphs, Roditty and Zwick \cite{RoddityZ12}
designed an algorithm that  finds all
replacement paths in $O(m
\sqrt n)$ time.
For the $k$-shortest paths problem, Roditty \cite{Roditty07}
presented
an algorithm with an approximation ratio
3/2, and the running time  $O(k(m\sqrt n+n^{3/2}
\log n))$.
 Bernstein \cite{Bernstein10} improved the above result
to get an approximation factor of $(1+\epsilon)$ and running
time $O(km/\epsilon)$.  The same paper also gives an improved
algorithm for the approximate $st$ replacement path algorithm.
See also \cite{GrandoniW12,Williams11,WeimannY10}.

As mentioned previously, a problem closely related to our problem is the fault tolerant subgraph problem. The aim of this problem is to
find a subgraph of $G$ such that BFS
tree from $s \in S$ is preserved in the subgraph after any
edge
deletion.
Parter and Peleg~\cite{ParterP13} designed an algorithm
to compute
single fault tolerant BFS tree with $O(n^{3/2})$ space.
They also showed their result can be easily extended
to multiple source with $O(\sigma^{1/2}n^{3/2})$ space.
Moreover, their upper bounds were complemented by matching
lower bounds for both their results.
This result was later extended to dual fault BFS tree  by
Parter~\cite{Parter15} with $O(n^{5/3})$ space.
Gupta and Khan \cite{GuptaK17} extended the above result
to
multiple sources with $O(\sigma^{1/3} n^{5 /3})$ space.
All the above results are optimal due to a result
by Parter~\cite{Parter15}  which states that a multiple
source $k$
fault tolerant BFS structure requires  $\Omega(\sigma^{\frac{1}{k+1}}n^{2-\frac{1}{k+1}})$
space.
Very recently, Bodwin et. al. \cite{BodwinGPW17} showed
the existence of a $k$ fault tolerant
BFS structure of size $\tilde O(k\sigma^{1/2^k}n^{2-1/2^k})$.

Other related problems include fault-tolerant DFS and fault
tolerant reachability.
Baswana et al.~\cite{BaswanaCCK16} designed an $\tilde{O}(m)$
sized fault tolerant data structure
that reports the DFS tree of an undirected graph after $k$
faults in $\tilde{O}(nk)$ time.
For single source reachability, Baswana et al. \cite{BaswanaCR16}
designed an algorithm
that finds a  fault tolerant reachability subgraph for
$k$ faults using $O(2^k n)$ edges.
\iflong
\else
\vspace{-2mm}
\fi

\iflong

\subsection{Comparison with Previous Technique}
Our technique should be directly compared to the technique
in Bil{\`{o}} et. al.\cite{BiloCGLP17}.
We discuss their work when $|S|=1$ and $|T|=n$, that is
we want a single source
fault tolerant distance oracle. Consider any $st$ path where
$t \in V$.
For the last $\sqrt n \log n$ edges in this $st$ path (edges
from the vertex $t$),
we explicitly store the shortest replacement paths avoiding
these edges.
For the remaining replacement paths, note that the length
of these replacement
paths is always $\ge \sqrt n \log n$. So, we sample a set
$R$ of $O(\sqrt n)$ vertices.
With very high probability, one of our sampled vertices will
lie on the replacement
path (Lemma 1, \cite{BiloCGLP17}).
We can also show that the shortest path from the sampled vertex
to $t$
will never contain the edge avoided by these remaining replacement
paths (Lemma 3, \cite{BiloCGLP17}).
We can store all  shortest paths from $v \in R$ to $t
\in V$ using space $O(n^{3/2})$.

Thus, we have reduced the problem of finding a replacement
paths from $s$ to vertices in $V$,
to finding replacement path from $s$ to vertices in $R$.
This reduction is useful as
there are just $ O(\sqrt n)$ vertices in $R$. Fortunately,
there already exists a
data-structure \cite{DemetrescuTCR08}, say $D$, that can
solve the reduced
problem in $O(n^{3/2})$ space and $O(1)$ query time.

Now the query algorithm in  \cite{BiloCGLP17} is straightforward.
Consider the query $Q(s,t,e)$.
If $e$ is one of  the last $\sqrt n \log n$ edges on $st$
path, then we have already stored
the replacement path.
Else, we know that the replacement path avoiding $e$ must
pass through a vertex in $R$. Unfortunately,
we don't know that vertex. So, we try out all the vertices
in $R$. That is, for each $v \in R$,
we find the shortest path from $s$ to $v$ avoiding $e$ (using
data-structure $D$ in $O(1)$ time)
and add it to shortest $vt$ distance. Thus, we have to return
the minimum of all the computed
shortest paths. This gives us the running time of $|R| =
 O(\sqrt n)$.

 To improve upon the techniques in \cite{BiloCGLP17}, we
use the following
strategy: we also sample $R$ vertices of size $\tilde O(\sqrt
n)$. Instead of looking at all
the vertices in $R$, we concentrate on the vertex of $R$
that lies on the $st$ path, say $v$.
If the replacement path passes through $v$, then we can
 find it in $O(1)$ time using $D$ (as done in \cite{BiloCGLP17}).
The main novel idea of this paper is to show that the number
of replacement paths that do not
pass through $v$ is $O(\sqrt n)$. This helps us in reducing
the running time from $O(\sqrt n)$
to $\tilde O(1)$. Moreover, we show that this technique
can be generalized to any number of sources.
\fi

\iflong
\else
\vspace{-2mm}
\fi

\section{Preliminaries}

We use the following notation throughout the paper:

\begin{itemize}[noitemsep,nolistsep]
\item $xy$ :\ Given
two vertices $x$ and $y$, let $xy$ denote a path between
$x$ and $y$. Normally this path will be the shortest path
from $x$ to $y$ in $G$. However, in some places in the paper, the
use of $xy$ will be clear from the context.

\item $|xy|$ :\ It denotes the number of edges in the path
$xy$.

\item $(\cdot \conc \cdot)$ :  Given two paths
$sx$ and $xt$, $sx \conc xt$ denotes the concatenation of
paths $sx$ and $xt$.

\item {\em after or below/before or above x} : We will assume that the $st$ path (for $s\in S$ and $t \in V$)  is drawn from top to bottom. Assume that $x \in st$. The term {\em after or below $x$} on $st$ path refers to the path $xt$. Similarly {\em before or above x} on $st$ path refers to the path $sx$.

\item {\em replacement path}: The shortest path that avoids any given edge is called a
{\em replacement} path.
\end{itemize}

\iflong
\else
\vspace{-2mm}
\fi
\section{Our Approach}

\noindent  We will randomly select a set of terminals $\TT$ by sampling
each vertex with probability $\sqrt\frac{\sigma}{n}$. Note
that the size of $\TT$ is $\tilde O(\sqrt {\sigma n})$ with high probability.
For a source $s$ and $t \in V$, let $t_s$ be the last terminal encountered on the $st$ path.
The following lemma is immediate:
\begin{lemma}
\label{lem:lower}
If $|st| \ge c \sqrt{\frac{n}{\sigma}} \log n$ $(c \ge 3)$,
then $|t_st| = \tilde O(\sqrt{\frac{n}{\sigma}})$ with a
very high probability for all $s \in S$ and $t \in V$.
\end{lemma}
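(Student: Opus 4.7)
\medskip

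\noindent\textbf{Proof proposal.} The plan is the standard hitting-set argument for random sampling. Fix any source $s \in S$ and any vertex $t \in V$ with $|st| \ge c\sqrt{n/\sigma}\log n$, and let $k := \lceil c\sqrt{n/\sigma}\log n \rceil$. Consider the $k$ vertices lying on the last $k$ edges of the (canonical) shortest $st$ path, i.e.\ the $k$ vertices closest to $t$ along $st$. Since $|st| \ge k$, these vertices exist and are distinct. If at least one of them belongs to $\TT$, then the last terminal $t_s$ on $st$ lies within these last $k$ edges, so $|t_st| \le k = \tilde O(\sqrt{n/\sigma})$, as required.

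\smallskip

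\noindent The next step is to bound the bad event. Each vertex is independently placed in $\TT$ with probability $p = \sqrt{\sigma/n}$, so the probability that none of those $k$ vertices is sampled is at most
\[
(1-p)^k \;\le\; e^{-pk} \;=\; e^{-c\log n} \;=\; n^{-c}.
\]
Now I take a union bound over all pairs $(s,t) \in S \times V$. There are at most $\sigma n \le n^2$ such pairs, so the probability that the conclusion fails for some pair is at most $n^{2-c}$. For $c \ge 3$ this is at most $1/n$, which gives the ``very high probability'' claim.

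\smallskip

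\noindent There is no real obstacle here; the only small subtlety is that the argument relies on the chosen $st$ path being a \emph{fixed} canonical shortest path, so that the $k$ candidate vertices are determined before the random sample $\TT$ is drawn. This is standard: we can fix a lexicographically least shortest path for every pair $(s,t)$ up front, so that independence between $\TT$ and the set of ``last $k$ vertices on $st$'' is preserved, and the union bound goes through.
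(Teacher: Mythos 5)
Your proof is correct and follows essentially the same argument as the paper: bound the probability that none of the last $\Theta(\sqrt{n/\sigma}\log n)$ vertices on $st$ is sampled by $n^{-c}$, then union bound over the at most $n\sigma$ pairs. Your added remark about fixing a canonical shortest path before sampling is a reasonable (and valid) precaution, handled in the paper by the perturbed graph $G_p$ with unique shortest paths.
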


\iflong
  \begin{proof}
  Let $E_{s,t}$ be the event  that none of the  last $c \sqrt{\frac{n}{\sigma}} \log n$
  vertices on $st$ path are in $\TT$.  So,
  $\mathsf{P}[E_{st}\ \text{occors} ] = (1 - \sqrt{\frac{\sigma}{n}})^{c \sqrt{\frac{n}{\sigma}}
  \log n} \le \frac{1}{n^c}.$ Using union bound, $\mathsf{P}[ \cup_{s,t} E_{s,t}  ~\text{occors}] \le n \sigma \frac{1}{n^c} \le \frac{1}{n^{c-2}}$.
  Thus, with a very high probability  $|t_st| = \tilde O(\sqrt{\frac{n}{\sigma}})$  for all $s \in S$ and $t \in V$.

  \end{proof}
\fi

Let $G_p$ denote the graph where each edge is perturbed by a weight function that ensures unique shortest paths. Our
$st$ path is the shortest $s$ to $t$ path in $G_p$, let
us denote its  length by $|st|_p$. Note that $G_p$ contains a unique shortest path between any two vertices,
even the ones that avoid an edge -- such a graph has been used before in related problems
\cite{BernsteinK09,ParterP13,HershbergerS01}. We can use $G_p$ even to find all the
replacement paths. However, we want our replacement
paths to have  other nice property, that is,
{\em the length replacement paths(without perturbation) from $s$ to $t$ are different}.
This property is not satisfied by replacement paths in $G_p$. We employ another
simple strategy to find a replacement path. Following \cite{GuptaK17}, we
define preferred replacement paths:

\begin{definition}
A path $P$ is called a \textbf{preferred} replacement path from $s$ to $t$ avoiding $e$ if
(1)  it diverges and merges the $st$ path {\em just once}
(2)  it divergence point from the $st$ path is as close to $s$ as possible
(3)  it is the shortest path in $G_p$ satisfying (1) and (2).
\end{definition}

\noindent The replacement path has to diverge from the $st$ path before $e$. Ideally,
we want a replacement path that diverges from $st$ path as close to $s$ as possible.
This is a crucial feature which will ensure that all  replacement paths from
$s$ to  $t$ have  different lengths. The first condition ensures that we do
not diverge from $st$ path just to get a higher point of divergence. If many shortest paths are diverging from a same vertex, the third condition
is used to break ties. In the ensuing discussion, we will assume that  we are always working with a preferred replacement path.

The initial $st$ path is found out by finding the unique shortest path in $G_p$.
Consider the query
${\sc Q}(s,t,e)$. If the failed edge $e$ does not lie on $st$ path, then we can report $|st|$ as the shortest distance from $s$ to $t$ avoiding $e$. To this end, we should be able to check whether $e$ lies in the shortest path from $s$ to $t$. At this point, we will use the property of graph $G_p$. If $e(u,v)$ lies in $st$ path, then we have to check if $u$ and $v$ lie on $st$ path. To this end, we check if $|su|_p + |ut|_p = |st|_p$ and $|sv|_p + |vt|_p = |st|_p$.
If both the above two equations are satisfied then the $st$ path passes through $e$ (as the shortest path from $u$ to $v$ is 1). We can also find whether $u$ or $v$ is closer to  $s$ on $st$ path. Without loss of generality assume that $u$ is closer to $s$ than $v$ on $st$ path.

However, we do not have space to store all these distances. Specifically, the second term on the LHS of  above two equations mandates that we store the distance of every pair of vertices in the graph. This implies that the size of our data structure is
$O(n^2)$ which is not desirable.

To solve the above problem, we observe that if $e$ lies in the  $t_st$ path, then we have just enough space to store this fact. So, given any $e$, we can easily find if $e \in t_st$. If $e \in st_s$, then we know that $|su|_p+|ut_s|_p+|t_st|_p = |st|_p$ and $|sv|_p+|vt_s|_p+|t_st|_p
= |st|_p$. This equality is easier to check with the space at hand. So, we have the following two cases:

\begin{enumerate}[noitemsep,nolistsep]
\item(Near Case) $e$ lies on $t_st$.
\item(Far Case) $e$ lies on $st_s$.
\end{enumerate}

\iflong
\else
\vspace{-2mm}
\fi
\subsection{Handling the Near Case}

For each $e(u,v) \in t_st$, let $P_e$ be the preferred replacement path from $s$ to $t$ avoiding $e$. We put $(e, |P_e|)$ in a balanced binary search tree $\BST(s,t)$ with the key being $e$.  Given any query $\sc{Q}(s,t,e)$, we now need to check if $e$ lies in $\BST(s,t)$. This can be done in $\tilde O(1)$ time and the length of the preferred replacement path  can be reported.

The space required for $\BST(s,t)$ is directly proportional to the size of path $t_st$. By Lemma \ref{lem:lower}, we know that $|t_st| = \tilde O(\sqrt{\frac{n}{\sigma}})$. Thus, the size of $\BST(s,t)\ = \tilde O(\sqrt{\frac{n}{\sigma}})$. This implies that the cumulative size of all the associated binary search tree  is $\cup_{t \in V}\cup_{s \in S} |t_st| = \tilde O( n \sigma \sqrt{\frac{n}{\sigma}}) = \tilde O(\sigma^{1/2} n^{3/2})$.
\iflong
\else
\vspace{-3mm}
\fi
\subsection{Handling the Far Case}

We first need to check if $e \in st_s$. To this end we use the following data-structures.
\begin{itemize}[leftmargin=*,noitemsep,nolistsep]
\item $B_0$: For each pair of vertices $x$ and $y$ where $x
\in (S \cup \TT)$ and $y \in V$, the shortest
path between $x$ and $y$ in $G$ and $G_p$ is stored in $B_0(x,y)$ and $B_0^p(x,y)$ respectively.  The total
size of $B_0$ is $\tilde O((\sigma +\sqrt{n\sigma}) n) = \tilde O(\sigma^{1/2} n^{3/2})$.


\item $B_1$: For each pair of vertices $s \in S$ and $t \in V$, $B_1(s,t)$ contains the
vertex in $\TT$ closest to $t$ on $st$ path, that is $t_s$.  The total size of
$B_1$ is $ O(\sigma n) = \tilde O(\sigma^{1/2}n^{3/2})$.
\end{itemize}

\noindent 
To check if $e(u,v) \in st_s$, we first find $t_s \leftarrow B_1(s,t)$.
Then we check if $B_0^{p}(s,u) +\ B_0^{p}(u,t_s) +\ B_0^{p}(t_s,t) = B_0^{p}(s,t)$ and $B_0^{p}(s,v) +\ B_0^{p}(v,t_s) +\ B_0^{p}(t_s,t)
= B_0^{p}(s,t)$.
If yes, then $e \in st_s$.
 We subdivide the far case into two more sub-cases:
\begin{enumerate}[noitemsep,nolistsep]
   \item The preferred replacement path  avoiding $e$ passes through $t_s$.

   \item The preferred replacement path avoiding $e$ avoids $t_s$.

\end{enumerate}

\noindent  The first case turns out to be a generalization of techniques
used by Demetrescu et. al.\cite{DemetrescuTCR08}  to solve the all pair distance oracle
under single edge/vertex failure -- we will
use the compact version of this algorithm presented by
Pettie and Duan \cite{DuanP09}. The second case is a {\em new and unexplored}
case. We will show that we
can bound the number of preferred replacement paths in this case to $O(\sqrt{n \sigma})$ for a fixed vertex $t$.
This would imply that the total number of such paths is $O(\sigma^{1/2}n^{3/2})$. We are able to
bound the number of paths even though these paths may intersect with each other --  this
is a new feature of our analysis which is much different from the analysis done by
Parter and Peleg \cite{ParterP13} on a related problem.

Section \ref{sec:passes} deals with the first case. In Section \ref{sec:avoids},
we will apply our new approach to the special case when $\sigma =1$,
or there is a single source. In Section \ref{sec:problem}, we will discuss the
potential problems in extending our approach to multiple sources.
Section \ref{sec:multi1} and \ref{sec:multi2}  extends our approach
to multiple sources and in Section \ref{sec:data} we develop our data-structure
that can answer queries in $\tilde O(1)$ time.
\iflong
\else
To save space, we have omitted proofs in this extended abstract. The concerned reader may read
the proof in the full version of the paper.
\fi

\iflong
\else
\vspace{-2mm}
\fi
\section{Preferred replacement path passes through $t_s$}
\label{sec:passes}
\iflong
Under this assumption, we only need to find the preferred replacement path from $s$ to $t_s$ avoiding $e$. Note that we already know $|t_st|$ length via $B_0(t_{s},t)$. We use the following additional data structures:

\begin{itemize}[leftmargin=*,noitemsep,nolistsep]
\item $B_2$: For $x \in S \cup \TT$ and $y \in V$, $B_2(x,y)$ contains the vertex $w$ on $xy$ path such that $|yw| = 2^{\lfloor \log |xy| \rfloor}$, or in words, the vertex nearest to $x$ whose distance from $y$ is a power of 2. The size of $B_2$ is $\tilde O((\sigma+\sqrt{n\sigma})n) = \tilde O(\sigma^{1/2}n^{3/2})$.

\item $B_3$: For $x \in V$ and $y \in \TT$, $B_4(x,y, \oplus2^i)$
contains the shortest path from $x$ to $y$ avoiding every
$2^i$-th
edge on the path $xy$ from $x$ (where $i \le \log \lfloor
|xy| \rfloor$). Since $|\TT| = \tilde O(\sqrt{n \sigma})$, the size
of $B_3$ is $\tilde O( \sigma^{1/2} n^{3/2} )$.

\item $B_4$: For $s \in S$ and $x \in V$, $B_4(s,x, \ominus2^i)$ contains  the shortest path from $s$ to $x$ avoiding the $2^i$-th
edge on the path $sx$ from $x$ (where $i \le \log \lfloor |sx| \rfloor$). The size of $B_4$ is $ \tilde O( n \sigma) = \tilde O( \sigma^{1/2} n^{3/2} )$.

\item $B_5$: For every $s \in S$ and $x \in \TT$, $B_5(s,x, [\oplus 2^i, \ominus 2^j])$ contains the shortest path from $s$ to $x$ avoiding the sub path
that start from $2^i$-th vertex from $s$ and ends at $2^j$-th
vertex from $x$ on the path $sx$ (where  $i,j \le \log \lfloor sx \rfloor$). The size of $B_5$ is $\tilde O(\sigma^{3/2} n^{1/2}) = \tilde O( \sigma^{1/2} n^{3/2} )$.

\end{itemize}
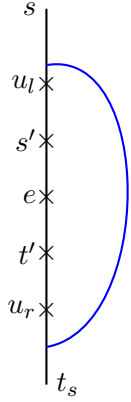
\begin{figure}
\centering

\begin{tikzpicture}[scale=1]

\coordinate (x) at (0,5);
\coordinate (y) at (0,0);
\coordinate (u) at (0,2.5);
\coordinate (ul) at (0,1);
\coordinate (ur) at (0,4);

\coordinate (x1) at (0,1.75);
\coordinate (y1) at (0,3.25);

\draw[thick](x)--(y);
\node[left] at (x){$s$};
\node[right] at (y){$t_{s}$};

\node[left] at (u){$e$};
\node at (u){$\times$};
\node[left] at (ul){$u_r$};
\node at (ul){$\times$};
\node[left] at (ur){$u_l$};
\node at (ur){$\times$};

\node[left] at (x1){$t'$};
\node at (x1){$\times$};
\node[left] at (y1){$s'$};
\node at (y1){$\times$};

\draw[thick,blue] (0,0.5) to[out=10,in=10] node[pos=0.2,
left]
{ } (0,4.25);
\end{tikzpicture}
\caption{The hardest part of the distance oracle, when the
replacement path neither passes through $u_l$ not $u_r$.}
\end{figure}
\noindent   Assume that we get a query
{\sc Q}($s,t,e(u,v)$), that is, find the shortest path from
$s$ to $t$ avoiding $e$. Assume without loss of generality that $u$ is closer to $s$ on $st$ path than $v$. We answer this query as follows:
first we find the distance of $v$ from $s$, via $B_0(s,v)$. If $B_0(s,v)$ is a power of 2 then
we can directly use $B_3(s,t_s, \oplus |sv|) + B_0(t_s,t)$. Else, let $u_l \leftarrow  B_2(s,v)$ and
$u_r \leftarrow B_2(t_s,u)$. Here $u_l$ is the nearest vertex to $s$ on $vs$ path whose
 distance from $v$ is a power of 2. Similarly, $u_r$ is the nearest vertex to $t_s$ on $ut_s$ path whose distance form $u$ is a power of 2.    There are three cases now:

\begin{enumerate}[noitemsep,nolistsep]
\item  The  preferred replacement path passes through $u_l$.

\item  The preferred replacement path passes through $u_r$.

\item The preferred replacement path neither passes through $u_l$
nor $u_r$.

\end{enumerate}
\noindent For the first case, we  can report $B_0(s,u_l) + B_3(u_l, t_{s}, \oplus|u_lv|) + B_0(t_s,t) $. For the second case, we can report $B_4(s,u_r, \ominus  |uu_r|) + B_0(u_r,t_s) + B_0(t_s,t)$ . The hardest
case is when the preferred replacement path neither passes through
$u_l$ nor $u_r$. Let $s'$ be the farthest
vertex  on $sv$ path from $s$ whose distance is a power
of 2, that is $|ss'| = 2^{\lfloor \log |sv| \rfloor}$. Similarly let $t'$ be the farthest vertex on $t_{s}u$
path from $t_s$ whose distance is a power
of 2. Now,
we can  report the shortest path from $s$ to $t_{s}$ avoiding
$[s',t']$ -- this is also stored in $B_5(s,t_s,[\oplus 2^{\lfloor \log |sv| \rfloor}, \ominus 2^{\lfloor
\log |t_su| \rfloor}] )$. By construction, if the shortest path avoids $[u_l,u_r]$, then it also
avoids $[s',t']$. Thus we can return $B_5(s,t_s,[\oplus 2^{\lfloor \log |sv| \rfloor}, \ominus 2^{\lfloor
\log |t_su| \rfloor}] ) +\ B_0(t_s,t)$ as the length  of the preferred replacement path.
The reader can check that the running time of our algorithm
is $O(1)$ as we have to check $B_0,B_1,B_2,B_3,B_4,B_5$ constant number of times.
For the correctness of the above procedure, please refer to \cite{DemetrescuTCR08,DuanP09}.
\else
Since this case is a generalization of the techniques developed by
Demetrescu et. al. \cite{DemetrescuTCR08}, concerned reader may read
the full version of the paper for details -- where we show that there exists a data-structure
of size $\tilde O(\sigma^{1/2}n^{3/2})$ which takes $O(1)$ time to find a replacement
path from $s$ to $t$ that avoids $e$ but passes through $t_s$.
\fi

Now, we move on to the harder case, that is, replacement paths avoid $t_s$ too. For this, we will  fix a vertex $t$. We will show
that the query
${\sc Q}(s,t,e(u,v))$  can be answered
in $\tilde O(1)$ using $\tilde O(\sqrt{\sigma n})$ space.
This immediately implies that we can answer exact queries
in $\tilde O(1)$ time using $\tilde O(\sigma^{1/2} n^{3/2})$
space.


\iflong
\else
\vspace{-2mm}
\fi
\section{Preferred Replacement path avoids $t_s$}

\label{sec:avoids}

Handling preferred replacement paths that avoid $t_s$ turns out to be a challenging and unexplored case. For better exposition, we will first solve the problem for the case when $\sigma=1$, that is there is only one source.
Let $\RR$ be the set of all preferred replacement paths from $s$ to $t$ that do not pass through $t_s$. We make  two important observations:

\begin{enumerate}[noitemsep,nolistsep]

\item The size of $\RR$ is $ O(\sqrt n)$.

\item Preferred replacement paths in $\RR$ avoid one contiguous sub path of $st$.

\end{enumerate}

\noindent Few remarks are in order. If the preferred replacement paths in $\RR$ were disjoint, then bounding the size of $\RR$ is easy. However, we are able to bound the size of $\RR$ even if paths are intersecting.  The second observation implies that we can build a balanced binary search tree containing paths in $\RR$. Each node in this tree will contain a preferred replacement path $P$. The key for each node will be the start and end vertex of the sub path $P$ avoids.  We will use this BST to find an appropriate replacement path that avoids an edge $e$.

\begin{definition}(Detour of a replacement path)
Let $P$ be a preferred replacement path avoiding an edge $e$ on $st$ path. Then detour of $P$ is defined as, $\DET(P) := P \setminus st$. That is, detour is a path the leaves $st$ before $e$ till the point it merges back to $st$ again.
\end{definition}

\noindent Since our  replacement path $P$  also avoids $t_s$, the following lemma is immediate by the definition of preferred path.

\begin{lemma}
Let $P$ be a preferred replacement path in $\RR$ that avoids $e$ and $t_s$ on
$st$ path, then (1) $\DET(P)$ cannot merge back to $st_s$ path and (2) $\DET(P)$ is a contiguous path.
\end{lemma}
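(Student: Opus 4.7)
Both parts follow from chasing the definition of a preferred replacement path together with the hypothesis that $P$ avoids $t_s$. My plan is to first dispatch (2), which is essentially immediate, and then derive (1) by a short contradiction.

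For (2), I would point directly to condition (1) in the definition of a preferred replacement path: $P$ diverges from the $st$ path and merges back just once. Therefore there is exactly one maximal sub-path of $P$ that is vertex-disjoint from $st$ except at its two endpoints; this sub-path is by construction $P\setminus st = \DET(P)$, so $\DET(P)$ is a single contiguous piece.

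For (1), let $x$ be the (unique) divergence vertex and $y$ be the (unique) merge vertex of $\DET(P)$. By the very fact that $P$ avoids $e$, the vertex $x$ must lie on the sub-path $sz$ strictly above $e$, and $y$ must lie on the sub-path below $e$ on $st$. Because $P$ merges back only once, the portion of $P$ from $y$ to $t$ coincides with the sub-path $yt$ of $st$. Now suppose, for contradiction, that $y$ lies on the $st_s$ sub-path. Since $e\in st_s$ and $y$ is after $e$ on $st$, $y$ sits strictly between $e$ and $t_s$ on $st$ (or $y=t_s$ itself). In either case $t_s$ lies on the sub-path $yt$ of $st$, so following $P$ from $y$ to $t$ forces $P$ to traverse $t_s$. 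This contradicts the hypothesis $P\in\RR$, which requires that $P$ avoid $t_s$. Hence $y$ must lie strictly below $t_s$ on the $t_st$ portion, i.e.\ $\DET(P)$ cannot merge back onto $st_s$.

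The argument is essentially a one-line application of the ``merges just once'' clause of the preferred-path definition, so I do not anticipate any real obstacle; the only subtle point to verify is the boundary case $y=t_s$, which is ruled out directly by $P$ avoiding $t_s$. No additional data-structural or quantitative ingredients are needed for this lemma, which merely sets up the structural picture used in the subsequent counting argument bounding $|\RR|$.
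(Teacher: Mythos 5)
Your proposal is correct and reconstructs exactly the reasoning the paper has in mind when it declares the lemma ``immediate by the definition of preferred path'' (the paper gives no explicit proof). Part (2) is indeed just the ``diverges and merges just once'' clause read off against $\DET(P) = P \setminus st$, and part (1) is the one-step contradiction you give: once $P$ has merged it must follow $st$ to $t$, so a merge point on $st_s$ would force $P$ through $t_s$, violating $P \in \RR$. The only blemish is cosmetic: the reference to ``the sub-path $sz$'' uses an undefined symbol $z$ where you clearly mean the prefix of $st$ strictly above $e$; this does not affect the argument.
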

\begin{lemma}
\label{lem:avoids}
Let $P,P' \in \RR$ avoid $e$ and $e'$ respectively on $st_s$ path. Also assume that $e$ is closer to $s$ than $e'$. Then
(1)  P avoids $e'$  (2) $\DET(P')$ starts  after $e$ on $st_s$ path and (3) $|P| > |P'|$.

\end{lemma}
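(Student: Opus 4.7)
I will prove the three parts in order, exploiting the detour structure enforced by $P, P' \in \RR$. For~(1), the plan is to combine the preceding lemma (which forbids $\DET(P)$ from merging back on $st_s$) with $P \in \RR$ (which forces $P$ to avoid $t_s$). Together these two facts fix the merge point $w_P$ of $P$ on $t_st$ strictly past $t_s$, so $e'$, which lies on $st_s$ strictly between $e$ and $t_s$, sits strictly inside the segment of $st$ bypassed by $\DET(P)$. Since $\DET(P)$ is contiguous and touches $st$ only at its two endpoints, $P$ misses both endpoints of $e'$ and therefore avoids it.

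For~(2), the plan is a contradiction argument. Suppose $\DET(P')$ started at or before $e$ on $st_s$; then the detour of $P'$ would span $e$, so $P'$ itself would avoid $e$, and both $P$ and $P'$ would avoid $\{e, e', t_s\}$ with the diverge-and-merge-once structure. Applying the divergence-minimality clause of the preferred definition to $P$ (treating $P'$ as an $e$-avoiding competitor) and to $P'$ (treating $P$ as an $e'$-avoiding competitor via~(1)) pins both divergence points to a common vertex of $st_s$; the shortest-in-$G_p$ tiebreaker and the uniqueness of shortest paths in $G_p$ then force $P = P'$, contradicting the implicit distinctness of $P$ and $P'$ required for claim~(3).

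For~(3), I will combine (1) and (2) with the preferred-definition tiebreaker. By~(2) the divergence of $P'$ lies strictly beyond $e$ on $st_s$ while that of $P$ lies strictly before $e$, so $P$ and $P'$ diverge at distinct vertices and, by~(1), $P$ is a legitimate competitor for the preferred replacement path avoiding $e'$. The preferred definition together with the $G_p$-uniqueness property then forbids $|P| = |P'|$ and pins the strict inequality in the direction $|P| > |P'|$. The hardest step I expect is this final one: the comparison is not immediate from the definition alone because $P$ and $P'$ diverge at different vertices and follow different detours, so the argument must carefully balance the prefix-versus-detour trade-off and invoke $G_p$-uniqueness at just the right moment to upgrade a weak comparison to the strict inequality claimed.
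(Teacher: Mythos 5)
Your plan for parts (1) and (2) essentially matches the paper's. For (1), both arguments rest on the fact that $P \in \RR$ forces $\DET(P)$ to merge only on $t_st$, so the contiguous detour of $P$ bypasses every edge of $st_s$ past $e$, in particular $e'$. For (2), both argue by contradiction: if $\DET(P')$ started at or before $e$, then $P$ and $P'$ would each avoid both $e$ and $e'$ with the diverge-merge-once structure, and the preferred-path definition plus $G_p$-uniqueness would force $P = P'$. Your sketch glosses over one needed step --- before invoking divergence-minimality with $P'$ as a competitor for $e$ (and $P$ as a competitor for $e'$), one must first establish $|P|=|P'|$ using the two-way shortest-path inequality, since divergence-minimality is only applied among \emph{shortest} replacement paths --- but the route is the same, and you rightly flag the implicit $P \neq P'$ assumption.

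For part (3), your plan drifts away from the actual argument and would not close. The paper's reasoning is direct and needs no $G_p$-uniqueness at all. By (1), $P$ avoids $e'$ and diverges-and-merges $st$ exactly once, so $P$ is a legitimate candidate for the preferred replacement path avoiding $e'$. Since $P'$ is the preferred path for $e'$, it is (among diverge-merge-once candidates) a shortest one, so $|P'| \le |P|$. If $|P'|=|P|$, then $P$ is also a shortest $e'$-avoiding candidate, and by (2) it diverges strictly closer to $s$ than $P'$ does; condition (2) of the preferred-path definition would then select $P$, not $P'$, as the preferred path for $e'$, a contradiction. Hence $|P| > |P'|$. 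There is no ``prefix-versus-detour trade-off'' to balance --- the comparison is just total unperturbed length --- and $G_p$-uniqueness plays no role here (it would only distinguish \emph{perturbed} lengths, and is in any case inapplicable since $P$ and $P'$ have different divergence points and so never compete under the $G_p$ tiebreaker). You yourself flag (3) as the step you are least sure about, and as written it is indeed the gap: you have not identified that $P'$ being the preferred path already gives $|P'| \le |P|$, and you have substituted a misplaced appeal to perturbation uniqueness for what is really just clause (2) of the preferred-path definition.
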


\iflong
  \begin{proof}

  \noindent (1) $P$ diverges from $st_s$ path above $e$. Since $P
  \in \RR$, it   merges back on $t_st$ path only. Since $e,e'
  \in st_s$ (we are in {\em far} case) and $e$ is closer to $s$ than $e'$, this implies
  that $P$ also avoids $e'$.

  \noindent (2) Assume that  $\DET(P')$ starts above $e$ on
  $st_s$ path. This implies that both $P$ and $P'$ avoid $e$
  and $e'$. But then, our algorithm will choose one of these two paths as a preferred path that avoids both $e$ and $e'$.  Thus,
  we arrive at a contradiction as there are two different preferred replacement paths avoiding $e$ and $e'$.

  \noindent (3) Since $P$  avoids $e'$ (by (1)) and $\DET(P')$ starts after $e$ (by (2)), $P'$ is the preferred path to avoid $e'$ only if $|P'| < |P|$ (else $P$ would be the preferred path as it leaves the $st$ path earlier than $P'$).

  \end{proof}
\fi

\noindent The converse of the third part of the lemma is also true. Since we will be using it in future, we prove it now.

\begin{lemma}
\label{lem:avoidreverse}

Let   $P$ and $P'$ be two preferred replacement paths that avoid $e$ and $e'$ on $st$ path respectively. If $|P| > |P'|$, then $e$ is closer to $s$ than $e'$.
\end{lemma}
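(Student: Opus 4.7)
The plan is to obtain this statement as an immediate contrapositive of part (3) of Lemma \ref{lem:avoids}, using the fact that two distinct preferred replacement paths are strictly ordered by their failed edges along $st$. Since we are inside Section \ref{sec:avoids}, both $P$ and $P'$ implicitly belong to $\RR$, so the hypotheses of Lemma \ref{lem:avoids} are available.

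First I would rule out $e=e'$: the preferred replacement path for a given failed edge is unique (the graph $G_p$ has unique shortest paths, and the definition of preferred paths then breaks any remaining tie by the earliest divergence point from $st$). Hence if $e=e'$ then $P=P'$, contradicting $|P|>|P'|$. So $e\neq e'$, and since both failed edges lie on the same $st_s$-subpath, one is strictly closer to $s$ than the other.

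Next I would suppose for contradiction that $e'$ is strictly closer to $s$ than $e$. Swap the roles of the pairs $(P,e)$ and $(P',e')$ in Lemma \ref{lem:avoids}(3): the hypotheses now read ``$P',P\in\RR$ avoid $e',e$ respectively, with $e'$ closer to $s$ than $e$'', and the conclusion is $|P'|>|P|$. This directly contradicts the assumption $|P|>|P'|$. Therefore $e$ must be closer to $s$ than $e'$, proving the lemma.

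The only subtlety — and thus the only thing worth double-checking — is that Lemma \ref{lem:avoids}(3) genuinely applies symmetrically, i.e.\ that nothing in its statement singles out the ``first'' argument. Reading it off, the hypothesis is purely symmetric in the two replacement paths once we specify which of $e,e'$ is the one closer to $s$, so the swap is legitimate. No new calculation is required; the lemma is essentially a one-line consequence of Lemma \ref{lem:avoids}(3) together with the uniqueness of preferred replacement paths in $G_p$.
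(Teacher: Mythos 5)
Your proof is correct and follows essentially the same route as the paper's: the paper also argues by contradiction that if $e'$ were closer to $s$, then $P'$ (which diverges before $e'$ and merges only on $t_st$) would avoid $e$ as well and, being shorter, would be the preferred replacement path for $e$ — which is exactly the content of Lemma \ref{lem:avoids}(3) with the roles of the two paths swapped, as you invoke it. Your explicit dismissal of the degenerate case $e=e'$ via uniqueness of preferred paths is a small point the paper glosses over, but otherwise the two arguments coincide.
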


\iflong
  \begin{proof}
  Assume for contradiction that $e'$ is closer to $s$ than $e$. Since the replacement path $P'$ has to diverge from $st$ before $e'$ and merge again only in $t_st$,  $P'$ also avoid $e$. But then $P'$ should be the replacement path for avoiding $e$ too, as $|P'| < |P|$, a  contradiction.
  \end{proof}
\fi
\iflong
\else
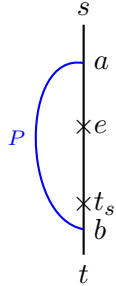
\begin{figure}[hpt!]
\centering
\begin{tikzpicture}[scale=1.7]

\definecolor{dgreen}{rgb}{0.0, 0.5, 0.0}
\begin{scope}[xshift=0cm]
\coordinate (s) at (0,1.8);
\coordinate (t) at (0,0);
\coordinate (ts) at (0,0.4);
\coordinate (b) at (0,.2);

\coordinate (a) at (0,1.5);
\coordinate (v) at (0,1);

\draw[thick](s)--(t);
\node[above] at (s){$s$};
\node[below] at (t){$t$};
\node[right] at (a){$a$};
\node[right] at (b){$b$};

\draw[blue,thick] (a) to[out=170,in=170] node[pos=0.5,left]
{\scriptsize  $P$}  (b);
\node at (v){$\times$};
\node[right] at (v){$e$};

\node at (ts){$\times$};
\node[right] at (ts){$t_s$};
\end{scope}

\end{tikzpicture}

\caption{$\DET(P)$ does not intersect detour of any path in $(>P)$}
\label{fig:singlefirstcase}
\end{figure}
\fi
\noindent By Lemma \ref{lem:avoids}{\small(3)}, we know that all preferred replacement
paths in $\RR$ have different lengths. In fact, it is the main reason we defined a preferred replacement path. We can thus arrange
these paths   in decreasing order of their lengths. Thus, we get the following corollary.
\begin{corollary}
\label{cor:arrange}
 Given a set $\RR$ of preferred replacement paths from $s$ to $t$ (that also avoid $t_s$), we can arrange  paths in decreasing order of their lengths.
\end{corollary}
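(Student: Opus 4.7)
The plan is to observe that this corollary is essentially a repackaging of Lemma~\ref{lem:avoids}(3). First I would note that since we are in the far case, every $P \in \RR$ avoids some specific edge $e_P$ lying on the subpath $st_s$. Moreover, by the definition of a preferred replacement path (using uniqueness in the perturbed graph $G_p$ to break ties), the map $P \mapsto e_P$ is injective: if two paths in $\RR$ avoided the same edge, they would in fact be the same preferred path.

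Next, I would enumerate the edges of $st_s$ in order of increasing distance from $s$, say $e_1, e_2, \ldots, e_k$, and let $\RR = \{P_{i_1}, P_{i_2}, \ldots, P_{i_r}\}$ indexed by this ordering so that $i_1 < i_2 < \cdots < i_r$. Then for any pair of indices $i_a < i_b$, the edge $e_{i_a}$ is closer to $s$ than $e_{i_b}$, so Lemma~\ref{lem:avoids}(3) directly yields $|P_{i_a}| > |P_{i_b}|$. Consequently the enumeration $P_{i_1}, P_{i_2}, \ldots, P_{i_r}$ is already a strictly decreasing arrangement by length, and the corollary follows.

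There is no real obstacle: the work has been done by Lemma~\ref{lem:avoids}(3) together with the uniqueness built into the definition of a preferred replacement path. The only content of the corollary is that (i) all paths in $\RR$ have pairwise distinct lengths, and (ii) the length order agrees with the order in which their avoided edges appear along $st_s$ from $s$. Both facts are immediate from the cited lemma, so the proof should be essentially one or two sentences.
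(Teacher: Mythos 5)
Your proposal is correct and matches the paper's reasoning: the paper derives the corollary directly from Lemma~\ref{lem:avoids}(3), observing that distinct paths in $\RR$ avoid edges at distinct distances from $s$ and therefore have strictly different (and monotonically ordered) lengths. Your extra remarks about injectivity and explicit enumeration are just a more detailed spelling-out of the same one-line argument.
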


\noindent Given a path $P \in \RR$, let $(<P)$ be the set of all preferred replacement paths with length less than $P$. Similarly, let $(>P)$ be the set of all preferred replacement paths with length greater than $P$.   If $P$ avoids $e$, then by Lemma \ref{lem:avoidreverse}, it also avoids all  edges avoided by paths in $(<P)$.  By Lemma \ref{lem:avoids}, for any path $P' \in (<P)$, $\DET(P')$ starts after $e$ on $st_s$ path.
We will now show a simple but important property of a path $P$ in $\RR$.

\begin{lemma}
\label{lem:length}
Let $P \in \RR$ be the shortest path from $s$ to $t$ avoiding $e$ such that $|P| = |st| +\ \ell$ where $\ell \ge 0$, then the size of the set $(<P)$ is $\le \ell$.
\end{lemma}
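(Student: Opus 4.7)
The plan is to exploit two facts that have already been set up in the excerpt: first, that every path in $\RR$ is an $s$--$t$ path and hence of length at least $|st|$; and second, that paths in $\RR$ have pairwise distinct (integer) lengths. Combining these with the obvious fact that $(<P)$ only contains paths with length strictly smaller than $|P|$, the lemma should reduce to a simple pigeonhole count over integers in the interval $[|st|, |st|+\ell-1]$.

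First I would observe that any $Q \in \RR$ is a walk from $s$ to $t$ in the (unweighted) graph $G$, hence $|Q| \ge |st|$. Next I would argue that the lengths of the paths in $\RR$, measured in $G$ (without perturbation), are pairwise distinct. This is exactly the content of Corollary \ref{cor:arrange} (or equivalently Lemma \ref{lem:avoids}(3) together with Lemma \ref{lem:avoidreverse}): two preferred replacement paths in $\RR$ avoiding different edges of $st_s$ are totally ordered in length by the position of the avoided edge along $st$, and they cannot be avoiding the same edge since each edge has only one preferred replacement path associated with it. So lengths are strictly different integers.

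Now consider $P \in \RR$ with $|P| = |st| + \ell$. Any $P' \in (<P)$ has $|P'| < |P| = |st|+\ell$ and $|P'| \ge |st|$, so $|P'| \in \{|st|, |st|+1, \dots, |st|+\ell-1\}$. Since the lengths of paths in $(<P) \subseteq \RR$ are pairwise distinct integers drawn from a set of size $\ell$, we conclude $|(<P)| \le \ell$.

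The step I expect to require the most care is the distinctness of lengths, because it is the only nontrivial ingredient; but it is already proved upstream (Lemma \ref{lem:avoids}(3) and Lemma \ref{lem:avoidreverse}) and captured in Corollary \ref{cor:arrange}, so the argument here really is just the pigeonhole count above. The remaining bookkeeping — matching the boundary case $\ell = 0$, where $(<P) = \emptyset$ — is immediate from the same interval argument.
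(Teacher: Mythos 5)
Your proof is correct and takes essentially the same approach as the paper: both arguments rest on the lower bound $|P'| \ge |st|$, the pairwise distinctness of lengths from Corollary~\ref{cor:arrange}, and a pigeonhole count over the $\ell$ integers in $\{|st|,\dots,|st|+\ell-1\}$. Your write-up merely makes the final counting step slightly more explicit than the paper does.
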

\iflong
  \begin{proof}
  Since a path in $(<P)$ avoids some edge in $st$ path, its length has to be $\ge |st|$. By Corollary \ref{cor:arrange},
  all paths in $\RR$, and thus $(<P)$ have different lengths. But the length of paths in $(<P)$ is less than the length of $P$.
  Thus, there can be atmost $\ell$ paths in $(<P)$.
  \end{proof}
\fi

\begin{definition}
\label{def:unique}
(Unique path of $P$) Let $\UNQ(P)$ be the prefix of\ $\DET(P)$ which does not intersect with any detours in $\cup_{P' \in (>P)} \DET(P')$.
\end{definition}
We now arrange all  preferred replacement paths in $\RR$ in decreasing order of their lengths. Assume that we are processing a path $P$ according to this ordering such that $P$ avoids $e$ on $st$ path. If $|\UNQ(P)| \ge \sqrt n$, then we have associated $O(\sqrt n)$ vertices on $\UNQ(P)$ to $P$. Else $\UNQ(P) < \sqrt{n}$ and we have the following two cases:

\iflong
\else
\vspace{-2mm}
\fi
\subsection{   $\DET(P)$ does not intersect with  detour of any path in $(>P)$}
\label{subsec:singlecaseone}
\iflong
\begin{figure}[hpt!]
\centering
\begin{tikzpicture}[scale=1.7]

\definecolor{dgreen}{rgb}{0.0, 0.5, 0.0}
\begin{scope}[xshift=0cm]
\coordinate (s) at (0,1.8);
\coordinate (t) at (0,0);
\coordinate (ts) at (0,0.4);
\coordinate (b) at (0,.2);

\coordinate (a) at (0,1.5);
\coordinate (v) at (0,1);

\draw[thick](s)--(t);
\node[above] at (s){$s$};
\node[below] at (t){$t$};
\node[right] at (a){$a$};
\node[right] at (b){$b$};

\draw[blue,thick] (a) to[out=170,in=170] node[pos=0.5,left]
{\scriptsize  $P$}  (b);
\node at (v){$\times$};
\node[right] at (v){$e$};

\node at (ts){$\times$};
\node[right] at (ts){$t_s$};
\end{scope}

\end{tikzpicture}

\caption{$\DET(P)$ does not intersect detour of any path in $(>P)$}
\label{fig:singlefirstcase}
\end{figure}
\fi
    Let $\DET(P)$ start at $a$ and end at $b$ -- the vertex where it touches
    $t_st$ path. Let $ab$ denote the
path from $a$ to $b$ on $P$. By our assumption $\UNQ(P) = ab$ and $|ab| < \sqrt n$.
    By Lemma \ref{lem:avoids}, all  replacement paths in $(<P)$ pass through $e$
    (as detour of these replacement paths start below $e$) and by Lemma \ref{lem:avoidreverse},
    these replacement paths avoid edges that are closer to $t$ than $e$. We can view the
    replacement paths as if they are starting from the vertex $a$. That is, consider  paths
    $\{P\setminus sa\} \cup \{ P'\setminus sa | \ P' \in (<P)\}$. These replacement
    paths avoid edges in $at$.   $|P \setminus sa| = |ab| +  |bt| \le |ab|\ + |at| < |at| + \sqrt n$.
    Applying Lemma \ref{lem:length}, we infer that the number of paths in $\{ P'\setminus
sa | P' \in (<P)\}$ is $\le \sqrt n$

\iflong
\else
\vspace{-2mm}
\fi
\subsection{$\DET(P)$ intersects with  detour of a path in $(>P)$}
\label{subsec:singlecasetwo}
Assume that $P$ first intersects with $P' \in (>P)$. Let $P'$ avoid $e'$ and $\DET(P')$ start at $a'$ and end at $b'$ (see Figure \ref{fig:singlesecondcase}). Let us assume that $\DET(P)$ starts at $a$ and it intersects $\DET(P')$ at $c$. This implies that $\UNQ(P) = ac$.

\noindent Consider the path $sa' \conc a'c \conc ca \conc at$. We claim that this path avoids $e'$. This is due to the fact that by Lemma \ref{lem:avoids}, $\DET(P)$
starts after $e'$ on $st$ path. So, $ca$ and $at$ avoids $e'$. Since $P' = sa' \conc a'c \conc cb' \conc b't$, length of $P'$ must be $\le$ length of the alternate path. Thus,\\
\begin{tabular}{llll}
 & $|sa'| +  |a'c|
+ |cb'| + |b't|$  & $\le$ &$|sa'| + |a'c| + |ca|  + |at|$\\
$\implies$& $
|cb'| + |b't|$ & $\le$ & $|ca|  + |at|$   \\
$\implies$& $|ac| +
|cb'| + |b't|$ & $\le$ &   $2|ca|  + |at|$

\end{tabular}
\begin{figure}[hpt!]
\centering
\begin{tikzpicture}[scale=2]

\definecolor{dgreen}{rgb}{0.0, 0.5, 0.0}
\begin{scope}[xshift=0cm]
\coordinate (s) at (0,1.6);
\coordinate (t) at (0,0);
\coordinate (ts) at (0,0.4);
\coordinate (b1) at (0,.2);

\coordinate (a1) at (0,1.5);
\coordinate (a) at (0,1);
\coordinate (v) at (0,0.7);
\coordinate (v1) at (0,1.2);
\coordinate (c) at (-0.585,0.7);

\draw[thick](s)--(t);
\node[above] at (s){$s$};
\node[below] at (t){$t$};
\node[right] at (a1){$a'$};
\node[right] at (a){$a$};
\node[right] at (b1){$b'$};
\node[left] at (c){$c$};

\draw[blue,thick] (a1) to[out=180,in=180,distance=.8cm] node[pos=0.3,left]
{\scriptsize  $P'$}  (b1);

\draw[red,thick] (a) to[out=180,in=80]
node[pos=0.4,above]
{\scriptsize  $P$}  (c);

\node at (v1){$\times$};
\node[right] at (v1){$e'$};

\node at (v){$\times$};
\node[right] at (v){$e$};

\node at (ts){$\times$};
\node[right] at (ts){$t_s$};
\end{scope}

\end{tikzpicture}

\caption{$\DET(P)$ intersects first with $\DET(P')$ at $c$ where  $P'
\in (>P)$.}
\label{fig:singlesecondcase}
\end{figure}
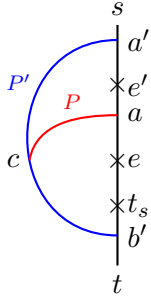
On the left hand of the inequality, we  have a path from $a$ to $t$ avoiding $e$. So, its length should be $\ge$ length of the preferred path $P \setminus sa$. Thus $|P \setminus sa| \le  2|ca|  + |at| \le 2\sqrt n\
+\ |at|$. By Lemma \ref{lem:avoids},
all  replacement paths in $(<P)$ pass through $e$ (as
detour of these replacement paths start below $e$) and by
Lemma \ref{lem:avoidreverse}, these replacement paths avoid
edges that are closer to $t$ than $e$. We can view the
replacement paths as if they are starting from the vertex $a$.
That is, consider  paths $\{P\setminus sa\} \cup \{ P'\setminus
sa | P' \in (<P)\}$. Applying lemma \ref{lem:length}, we infer that the number of
paths in $\{ P'\setminus
sa | P' \in (<P)\}$ is $\le 2 \sqrt n$.

Our arguments above point to the following important observation: {\em Once we find a replacement path in $\RR$ with unique path length $< \sqrt n$, then there are at most 2$\sqrt n$ replacement paths in $\RR$ left to process.}
Since there can be at most $\sqrt n$ paths in $\RR$ with unique path length $\ge \sqrt n$, we have proven the following lemma:

\begin{lemma}
\label{lem:sizeR}
$|\RR| = O(\sqrt n)$.
\end{lemma}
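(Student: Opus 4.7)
The plan is to combine the two cases the authors have already analyzed with one extra observation about disjointness of the $\UNQ$ pieces. Process the paths of $\RR$ in decreasing order of length (which is well-defined by Corollary~\ref{cor:arrange}); call the sequence $P_1, P_2, \ldots, P_k$. For each $P_i$, $\UNQ(P_i)$ is the prefix of $\DET(P_i)$ disjoint from every $\DET(P_j)$ with $j < i$ (i.e.\ $P_j \in (>P_i)$).

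The first sub-claim I would establish is that the sets $\UNQ(P_1), \ldots, \UNQ(P_k)$ are pairwise vertex-disjoint (on their internal, off-$st$ vertices). Indeed, for $j < i$ we have $\UNQ(P_j) \subseteq \DET(P_j)$, and by the very definition of $\UNQ(P_i)$ it avoids $\DET(P_j)$, so $\UNQ(P_i) \cap \UNQ(P_j) = \emptyset$. Since all $\UNQ$ pieces live in the graph on $n$ vertices, the number of indices $i$ with $|\UNQ(P_i)| \geq \sqrt{n}$ is at most $n/\sqrt{n} = \sqrt{n}$.

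Now let $P_{i^*}$ be the \emph{first} path in the ordering with $|\UNQ(P_{i^*})| < \sqrt{n}$ (if no such path exists, we are already done, with $|\RR| \leq \sqrt{n}$). For $P_{i^*}$, either $\DET(P_{i^*})$ is disjoint from every earlier detour (Section~\ref{subsec:singlecaseone}) or it meets some earlier $\DET(P')$ for $P' \in (>P_{i^*})$ (Section~\ref{subsec:singlecasetwo}). In both scenarios the authors have already shown, by rerouting through the unique prefix of $P_{i^*}$ and applying Lemma~\ref{lem:length} from the vertex $a$ where $\DET(P_{i^*})$ leaves $st$, that the number of paths in $(<P_{i^*})$ is at most $2\sqrt{n}$. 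Hence $|(<P_{i^*})| \leq 2\sqrt{n}$.

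Putting the two bounds together:
\[
|\RR| \;\leq\; \underbrace{\sqrt{n}}_{\text{paths with }|\UNQ|\geq\sqrt{n}} \;+\; \underbrace{1}_{P_{i^*}} \;+\; \underbrace{2\sqrt{n}}_{|(<P_{i^*})|} \;=\; O(\sqrt{n}),
\]
which is exactly the statement. The only genuinely new ingredient here is the pairwise-disjointness of the $\UNQ$ pieces, and I expect that to be the main obstacle to verify carefully; the two subcases for the short-$\UNQ$ path were already handled in Sections~\ref{subsec:singlecaseone} and~\ref{subsec:singlecasetwo}, and the length-ordering step is Corollary~\ref{cor:arrange}. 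Nothing about disjointness of detours themselves is ever used, which is why this argument works in the regime where replacement paths may intersect.
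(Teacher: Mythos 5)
Your proof is correct and follows the same route as the paper: order the paths in $\RR$ by decreasing length, observe that the $\UNQ$ pieces are pairwise (essentially) vertex-disjoint so at most $\sqrt n$ paths can have $|\UNQ| \ge \sqrt n$, and then apply the two-subcase analysis of Sections~\ref{subsec:singlecaseone} and~\ref{subsec:singlecasetwo} to the first path with short $\UNQ$ to bound the remaining paths by $2\sqrt n$ via Lemma~\ref{lem:length}. The paper leaves the disjointness of the $\UNQ$ pieces implicit in the phrase ``we have associated $O(\sqrt n)$ vertices on $\UNQ(P)$ to $P$''; you have simply made that counting step explicit, which is a faithful reconstruction rather than a different argument.
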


We now build a data-structure which will exploit Lemma \ref{lem:sizeR}.
However, we need another key but simple observation. By Lemma \ref{lem:avoids}, if $|P| > |P'|$, then $\DET(P')$ starts below the edge avoided by $P$. This lemma implies that $\DET(P')$ starts below all  edges avoided by $P$. Thus $P$ avoids some contiguous path in $st_s$ and detour of all replacement paths in $(<P)$ start below the last edge (which is closer to $t_s$) in this subpath. Thus, we have proved the second key lemma:

\begin{lemma}
\label{lem:contiguous}

A replacement path $P$ avoids a contiguous subpath of $st$.
\end{lemma}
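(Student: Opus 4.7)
My plan is to interpret the statement as follows: the set
$E_P := \{\, e \in st_s : P \text{ is the preferred replacement path avoiding } e \,\}$ forms a contiguous subpath of $st_s$, which is exactly what the subsequent BST (keyed on the start and end vertex of ``the subpath $P$ avoids'') needs in order to binary-search. I would prove this by a short contradiction argument that relies only on Lemma~\ref{lem:avoids}.

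The first step is to pick any two edges $e_1, e_2 \in E_P$ with $e_1$ closer to $s$ than $e_2$, together with an arbitrary edge $e$ of $st_s$ strictly between them; the goal is to show $e \in E_P$. Because $P \in \RR$ is preferred, its detour is a single contiguous bypass from a divergence vertex $a$ (above $e_1$) to a merge vertex $b$ in $t_st$ (below $e_2$), so $P$ itself already avoids $e$ as a graph path. Now let $P^*$ denote the preferred replacement path avoiding $e$; since we are in the subcase where preferred paths avoid $t_s$, $P^* \in \RR$. Assuming $P^* \ne P$, I would apply Lemma~\ref{lem:avoids}(3) twice: once to $(P, P^*)$ with the edge pair $(e_1, e)$, yielding $|P| > |P^*|$; and once to $(P^*, P)$ with the edge pair $(e, e_2)$, yielding $|P^*| > |P|$. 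These two inequalities are inconsistent, forcing $P^* = P$ and hence $e \in E_P$.

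The only delicate point I anticipate is that Lemma~\ref{lem:avoids} is implicitly valid only when its two paths are distinct elements of $\RR$ (its proof of parts (2)--(3) contrasts two different preferred answers and collapses when they coincide). Both applications above involve $P$ and $P^*$, which are distinct by hypothesis, so this concern never bites. The residual bookkeeping --- that $a$ strictly precedes $e_1$, that $b$ lies on $t_st$ and hence below $e_2$, and that $P^*$ avoids $t_s$ --- all follows directly from the preferred-path definition and from being in the current sub-case, so no additional machinery is required.
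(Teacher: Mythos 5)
Your interpretation of the lemma ($E_P$, the set of edges for which $P$ is the preferred replacement path, is contiguous) matches the paper's intent, and your double application of Lemma~\ref{lem:avoids}(3) is essentially the same length-ordering argument the paper sketches inline. The mechanical details are fine. However, the assertion ``since we are in the subcase where preferred paths avoid $t_s$, $P^* \in \RR$'' is where the argument actually breaks, and the ``delicate point'' you flag (distinctness of $P$ and $P^*$) is not the real concern.

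The subcase of Section~\ref{sec:avoids} is a property of the queried edge, not of every edge lying between $\FF(P)$ and $\LL(P)$. For an edge $e$ strictly between $e_1, e_2 \in E_P$, the globally preferred replacement path $P^*$ can take a short \emph{local} detour: it may diverge strictly below $\DET(P)$'s start, merge strictly above $e_2$ (hence above $t_s$), and still be shorter than $P$. Such a $P^*$ avoids $e$ but avoids neither $e_1$ nor $e_2$, so it does not contradict $P$ being preferred at those two edges, yet $P^* \notin \RR$ because it passes through $t_s$. In that scenario Lemma~\ref{lem:avoids}, which is stated and proved for paths in $\RR$ and whose part~(1) explicitly uses that the first path merges on $t_st$, does not license your second application (with $P^*$ in the first slot), and your sandwich collapses: step~4 still gives $|P| > |P^*|$, but step~5 no longer applies. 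So $E_P$ in the strict sense you define need not be contiguous. What is actually true, and what the paper's inline argument (and yours, once $P^* \in \RR$ is imposed) proves, is the weaker statement that $E_P$ is contiguous among the edges whose preferred replacement path avoids $t_s$. That weaker statement is all the data structure requires, because queries on edges whose preferred path passes through $t_s$ are served by the Section~\ref{sec:passes} structure and the algorithm returns the minimum of the two answers. So the fix is not to argue that $P^*\in\RR$ always holds -- it doesn't -- but to restrict the claim explicitly to $\RR$-covered edges, at which point $P^*\in\RR$ is true by assumption and both applications of Lemma~\ref{lem:avoids} go through.
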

\iflong
\else
\begin{figure}[hpt!]

\centering
\begin{tikzpicture}[scale=1.5]

\definecolor{dgreen}{rgb}{0.0, 0.5, 0.0}
\begin{scope}[xshift=0cm]
\coordinate (s) at (-1,2);
\coordinate (s1) at (1,2);
\coordinate (t) at (0,0);
\coordinate (ts) at (0,0.4);
\coordinate (b1) at (0.5,1);

\coordinate (a1) at (-0.75,1.5);
\coordinate (a) at (0.65,1.3);
\coordinate (v) at (0.34,0.7);
\coordinate (v1) at (-0.6,1.2);
\coordinate (c) at (-0.2,1.12);

\draw[thick](s)--(t);
\draw[thick](s1)--(t);
\node[above] at (s){$s'$};
\node[above] at (s1){$s$};
\node[below] at (t){$t$};
\node[left] at (a1){$a'$};
\node[right] at (a){$a$};
\node[right] at (b1){$b'$};
\node[below] at (c){$c$};

\draw[blue,thick] (a1) to[out=330,in=180,distance=.8cm]
node[pos=0.3,above]
{\scriptsize  $P'$}  (b1);

\draw[red,thick] (a) to[out=180,in=80]
node[pos=0.4,above]
{\scriptsize  $P$}  (c);

\node at (v1){$\times$};
\node[left] at (v1){$e'$};

\node at (v){$\times$};
\node[right] at (v){$e$};

\end{scope}

\end{tikzpicture}

\caption{The bad case for us: $P' \in (>P)$ intersects with $P$ and then passes through the edge $e$ that $P$ avoids. }
\label{fig:multiple}
\end{figure}
\fi
Let $\FF(P)$ and $\LL(P)$ denote the first and the last vertex of the contiguous path that $P$ avoids. Given a vertex $v$, let $v.depth$ denote the depth of $v$ in the BFS tree of $s$.  We can store the depth of all  vertices in an array (takes $O(n)$ space). Lastly, we build a balanced binary search tree\ BST($t$) in which each node represents a path $P$. The key used to search the node is the range: $[\FF(P).depth, \LL(P).depth]$. By Lemma \ref{lem:contiguous}, all  replacement paths avoid  contiguous subpaths of $st_s$.  These contiguous paths are also disjoint as there is only one preferred path avoiding an edge. Thus, the key we have chosen forms a total ordered set with respect to the relation $\{ <,>\}$.  The size of BST($t$) is $O(\sqrt n)$ as the size of $\RR$ is $O(\sqrt n)$. We are now ready to process any query {\sc Q}$(s,t,e(u,v))$. We just need to search for an interval in BST($t$) that contains $u.depth$ and $v.depth$. This can be done in $\tilde O(1)$ time. Thus we have proved the following theorem:

\begin{theorem}
There exists a data-structure of size $\tilde O(n^{3/2})$ for single source single fault tolerant exact distance oracle that can answer each query in $\tilde O(1)$ time.
\end{theorem}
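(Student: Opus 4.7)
The plan is to stitch together, with $\sigma=1$, the three sub-oracles already constructed in the preceding sections: (i) the per-pair balanced tree $\BST(s,t)$ which resolves queries whose failed edge $e$ lies on the short tail $t_st$ (the Near case); (ii) the battery of tables $B_0,\ldots,B_5$ which resolves Far-case queries whose preferred replacement path passes through $t_s$; and (iii) the per-target tree $\BST(t)$ which resolves Far-case queries whose preferred replacement path avoids $t_s$.

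On input {\sc Q}$(s,t,e(u,v))$, I would first decide in $O(1)$ time whether $e$ lies on the unique $st$ path in $G_p$ by two triangle-equality checks against $B_0^p$, returning $B_0(s,t)$ if not. Otherwise I fetch $t_s\leftarrow B_1(s,t)$ and, by a further pair of triangle checks that insert $t_s$ in the middle, decide whether the query is Near ($e\in t_st$) or Far ($e\in st_s$). A Near query is answered in $\tilde O(1)$ by looking up key $e$ in $\BST(s,t)$. A Far query is answered by computing both candidates and returning their minimum: the ``passes through $t_s$'' candidate via the $B_0$--$B_5$ procedure of Section~\ref{sec:passes} (locate $u_l,u_r$ via $B_2$, then take the minimum of the three sub-subcases ``through $u_l$'', ``through $u_r$'', and ``avoids $[u_l,u_r]$''), and the ``avoids $t_s$'' candidate by searching $\BST(t)$ for the unique interval $[\FF(P).depth,\LL(P).depth]$ that contains $u.depth$ and $v.depth$.

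For the space accounting, each of the three components contributes $\tilde O(n^{3/2})$ when $\sigma=1$: the $\BST(s,t)$ trees total $\tilde O(n\cdot \sqrt n)$ by Lemma~\ref{lem:lower}; the tables $B_0,\ldots,B_5$ are each $\tilde O(n^{3/2})$ by the analyses in Section~\ref{sec:passes}; and the $\BST(t)$ trees total $\tilde O(n\cdot \sqrt n)$ because Lemma~\ref{lem:sizeR} gives $|\RR|=O(\sqrt n)$ nodes per target. All three query-time costs are dominated by a constant number of balanced-BST searches and table lookups, so every query is answered in $\tilde O(1)$ time.

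Since the hard combinatorial work is already done in Lemmas~\ref{lem:sizeR} and~\ref{lem:contiguous}, the main remaining obstacle is correctness of the $\BST(t)$ search: I need the intervals $[\FF(P).depth,\LL(P).depth]$ keyed on BFS depth to be well-ordered so that a single interval-containment lookup returns the right preferred path. This is supplied by Lemma~\ref{lem:contiguous}, which tells us each $P\in\RR$ avoids a contiguous block of $st_s$; since only one preferred path is assigned to each failed edge, these intervals are pairwise disjoint and are therefore totally ordered by depth. With this in hand, the theorem follows by returning the minimum of the three case answers.
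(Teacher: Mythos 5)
Your proposal is correct and follows essentially the same route as the paper: the Near/Far split with $\BST(s,t)$, the $B_0$--$B_5$ tables for the ``passes through $t_s$'' subcase, and the depth-interval tree $\BST(t)$ justified by Lemmas~\ref{lem:sizeR} and~\ref{lem:contiguous}, returning the minimum over the candidate answers. The space and query-time accounting matches the paper's as well.
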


\iflong
\else
\vspace{-2mm}
\fi
\section{From Single Source to Multiple Sources}

Unfortunately, the analysis for the single source case is not easily extendible to multiple source case. We identify the exact problem here. Consider the case described in Section \ref{subsec:singlecasetwo}. In this case, we show that if $|P'| > |P|$ and $P$ intersects with $P'$, then there is a path available for $P$ (that is $ac \conc
cb' \conc b't$).  We can use this path because it also avoids $e $ (the edge avoided by path $P$). First, we show that the above assertion is not true when we move to multiple source case. Consider the following example (See Figure \ref{fig:multiple}).
Here, $P$ avoids $e$ on $st$ path and $P'$ avoids $e'$ on $s't$ path. $\DET(P)$ starts at $a$ and its intersects $P'$ at $c$. $\DET(P')$ starts at $a'$ and it hits $st$ path at $b'$ and then   passes through $e$. Note that the full path $P$ from $s$ to $t$ is not shown in Figure \ref{fig:multiple}. The reader can check that the  path $ac \conc
cb' \conc b't$ is not an alternate path for $P$ as it passes through $e$. We say that such a path is a bad path because it breaks the easy analysis of single source case (we will formally define bad paths in Section \ref{subsec:pintersects}).
However, we are able to show that the total number of {\em good} paths (paths which are not bad)
is $\ge$ the number of bad paths. Good paths exhibit properties similar to the set
$\RR$ in Section \ref{sec:avoids}. This will help us in bounding them (and thus bad paths too).
\label{sec:problem}
\iflong
\begin{figure}[hpt!]

\centering
\begin{tikzpicture}[scale=1.5]

\definecolor{dgreen}{rgb}{0.0, 0.5, 0.0}
\begin{scope}[xshift=0cm]
\coordinate (s) at (-1,2);
\coordinate (s1) at (1,2);
\coordinate (t) at (0,0);
\coordinate (ts) at (0,0.4);
\coordinate (b1) at (0.5,1);

\coordinate (a1) at (-0.75,1.5);
\coordinate (a) at (0.65,1.3);
\coordinate (v) at (0.34,0.7);
\coordinate (v1) at (-0.6,1.2);
\coordinate (c) at (-0.2,1.12);

\draw[thick](s)--(t);
\draw[thick](s1)--(t);
\node[above] at (s){$s'$};
\node[above] at (s1){$s$};
\node[below] at (t){$t$};
\node[left] at (a1){$a'$};
\node[right] at (a){$a$};
\node[right] at (b1){$b'$};
\node[below] at (c){$c$};

\draw[blue,thick] (a1) to[out=330,in=180,distance=.8cm]
node[pos=0.3,above]
{\scriptsize  $P'$}  (b1);

\draw[red,thick] (a) to[out=180,in=80]
node[pos=0.4,above]
{\scriptsize  $P$}  (c);

\node at (v1){$\times$};
\node[left] at (v1){$e'$};

\node at (v){$\times$};
\node[right] at (v){$e$};

\end{scope}

\end{tikzpicture}

\caption{The bad case for us: $P' \in (>P)$ intersects with $P$ and then passes through the edge $e$ that $P$ avoids. }
\label{fig:multiple}
\end{figure}
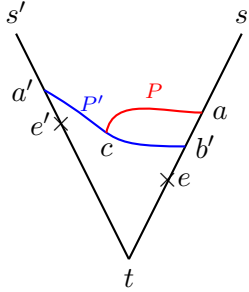
\fi
Once again we will fix a vertex $t$ and show that  the number of replacement paths from
$s \in S$ to $t$ that also avoids $t_s$ is $ O(\sqrt {\sigma n})$.
Let $\BFS(t)$ denote  the  union of all  shortest paths from $t$ to $s \in S$.
The reader can check that the union of these paths does not admit a cycle, so we
can assume that its a tree rooted at $t$. Since $\BFS(t)$ has at most $\sigma$
leaves, the number of vertices with degree $> 2$ in $\BFS(t)$ is $O(\sigma)$.
We now contract all the vertices of degree 2 (except $t$ and $s \in S$) in $\BFS(t)$
to get a tree that only contains
leaves of $\BFS(t)$, the root $t$, all the sources and all other vertices
with degree $> 2$ in $\BFS(t)$.
\begin{definition} (\SBFS($t$))
\SBFS($t$) is a tree obtained by contracting all the vertices with degree exactly 2 in $\BFS(t)$
except $t$ and source $s \in S$.
\end{definition}

\begin{definition} (Intersection vertex and segment in \SBFS($t$))

\noindent Each node  \SBFS$(t)$  is called an intersection vertex.
An edge $xy \in$ \SBFS($t$) denotes a path between two vertices in $\BFS(t)$.
We call such an edge in \SBFS\  a segment. We use this term in order to differentiate between  edges in $\BFS(t)$ and  \SBFS($t$).
Also, we will use the following convention: if $xy$ is a segment,
then $y$ is closer to $t$ than $x$.
\end{definition}
\noindent \SBFS($t$) has at most $\sigma$
vertices with degree $\le 2$. This implies that there are at most $O(\sigma)$
intersection vertices and segments in \SBFS($t$).

As in the single source case, we would like to find the preferred path for each avoided edge on the $st$
path where $s \in S$. However, we don't have enough space to store all these paths.
Also storing all  paths seems wasteful.
Consider two preferred replacement paths $P$ and $P'$ that start from $s$ and $s'$ respectively.
These two paths meet at an intersection vertex $x$ after which they are same, that is, they take
the same detour to reach $t$. Storing both $P$ and $P'$ seems wasteful as they are essentially
the same path once they hit $x$.  To this end, we only store preferred path
corresponding to each segment in \SBFS($t$). We now describe our approach in detail.

Let $xy$ be a segment in \SBFS($t$). We divide replacement paths whose detour start in $xy$ into
two types:
\begin{itemize}[noitemsep,nolistsep]

   \item[] $\TON(xy)$: Preferred replacement paths from $x$ to $t$ whose detour starts in $xy$
    but the avoided edge lies in $yt_x$.

   \item[] $\TTW(xy)$: Preferred replacement paths from $x$ to $t$ whose
   start of detour and avoided edge both lie strictly inside segment $xy$ (that is, detour cannot
   start from $x$ or $y$).
\end{itemize}

\noindent Let $\TON := \cup_{xy \in \text{\SBFS($t$)}} \TON(xy)$ and
$\TTW := \cup_{xy \in \text{\SBFS($t$)}} \TTW(xy)$. The set  $\TON$ helps us to weed out
simple preferred replacement paths.
We will show that we can store  preferred replacement paths in  $\TON$ in
$O(\sigma)$ space --  one per segment in \SBFS($t$).
The hardest case for us in $\TTW$, which contains bad paths.
Let $\BP$ denote the set of bad paths in $\TTW$. We will show  that
$|\BP| \le |\TTW \setminus \BP|$ (the number of bad paths is $\le$ number of good paths in $\TTW$)
and $|\TTW \setminus \BP| =  O(\sqrt{n\sigma})$ (the number of good path is $ O(\sqrt{n\sigma}$)).
This implies that $|\TTW| =  O(\sqrt{n \sigma})$. 

Since $\TON$ and $\TTW$ are of size $O(\sqrt{n\sigma})$, we can make a data-structure
of size $O(\sqrt{n\sigma})$.
In this data-structure, we have stored a preferred path for each segment.
However, we have to answer queries of type $\textsc{Q}(s,t,e)$
where $s$ is a source.
In Section \ref{sec:data}, we will see
how to use preferred paths of segments to
answer queries in $\tilde O(1)$ time.

\iflong
\else
\vspace{-2mm}
\fi
\section{Analysing preferred replacement paths in $\TON$}
\label{sec:multi1}
We first show the following:
\begin{lemma}
\label{lem:allsame}
For each segment $xy \in$ \SBFS$(t)$, $|\TON(xy)| = 1$
\end{lemma}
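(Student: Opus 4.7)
The plan is to argue by contradiction. Suppose $\TON(xy)$ contains two distinct preferred replacement paths $P$ and $P'$ from $x$ to $t$, avoiding edges $e$ and $e'$ on the $xt$ path respectively. Since distinct preferred paths from a common origin have distinct lengths (the argument of Corollary~\ref{cor:arrange} carries over verbatim with $x$ in place of $s$), I may assume $|P| > |P'|$, and then the analogue of Lemma~\ref{lem:avoidreverse} forces $e$ to be strictly closer to $x$ than $e'$ on the $xt$ path.

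The first key observation is that both $P$ and $P'$ avoid \textbf{both} edges $e$ and $e'$. Indeed, by the definition of $\TON(xy)$ both detours begin somewhere on segment $xy$, which is strictly above $e$ and $e'$ on the $xt$ path; and because both paths also avoid $t_x$, both detours merge back onto $xt$ only in the suffix $t_x t$. Between its divergence and its merge, a path in $\TON(xy)$ does not touch $xt$ at all, so every edge of $xt$ in the open interval from its divergence vertex to its merge vertex is avoided. As $e, e' \in y t_x$, those edges lie in this interval for both $P$ and $P'$.

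Let $a$ and $a'$ denote the divergence vertices of $P$ and $P'$ on segment $xy$. If $a \ne a'$, say $a'$ is strictly closer to $x$ than $a$, then $P'$ is a valid replacement path avoiding $e$ whose divergence vertex is strictly closer to $x$ than that of $P$, contradicting condition~(2) in the definition of a preferred replacement path applied to $e$; the symmetric case is analogous. Hence $a = a'$. Now $P$ and $P'$ diverge at the common vertex $a$, each merges back only on $t_x t$, and each avoids both $e$ and $e'$, so $P$ is a valid candidate for the preferred path avoiding $e'$ and vice versa. Condition~(3) in the preferred path definition then yields $|P| \le |P'|$ and $|P'| \le |P|$, so $|P| = |P'|$. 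Because every pair of vertices has a unique shortest path in the perturbed graph $G_p$, this equality of lengths upgrades to $P = P'$, contradicting $P \ne P'$.

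The only delicate point I expect is checking that the preferred-path machinery (the divergence-minimization rule and shortest-path uniqueness in $G_p$) transfers cleanly from a source $s \in S$ to an arbitrary intersection vertex $x$; once one notices that paths in $\TON(xy)$ are governed by the same three conditions with $x$ substituted for $s$, the rest is a short case analysis on the relative positions of the divergence vertices $a$ and $a'$ as above.
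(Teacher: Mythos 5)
Your proof is correct and takes the same essential route as the paper's: both hinge on the observation that any $P, P' \in \TON(xy)$, diverging in $xy$ and merging only in $t_x t$, must both avoid both $e$ and $e'$, after which the preferred-path tiebreaking rules force $P = P'$. Two small remarks. First, your opening paragraph (WLOG $|P| > |P'|$ and the ordering of $e, e'$) is never used; your actual contradiction comes from $P = P'$, so the paragraph can be dropped. Second, you apply the definition's conditions in the stated priority order (divergence point first to get $a = a'$, then minimality in $G_p$ to get $|P|_p = |P'|_p$), whereas the paper's own phrasing momentarily inverts that order ("choose the smaller... else the one that leaves earliest"); your version is the one that matches Definition~3, and the conclusion is the same either way. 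The only assumption you lean on that deserves a word is the final step: $|P|_p = |P'|_p \Rightarrow P = P'$ uses the perturbation property that distinct paths receive distinct $G_p$-lengths, which is slightly stronger than "unique shortest paths between vertex pairs," but is exactly what the paper asserts of $G_p$ ("even the ones that avoid an edge"), so this is fine.
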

\iflong
\begin{proof}
  Assume that there are two preferred paths
   $P$ and $P'$ whose detour start in $xy$ and their avoided edge $e$ and
  $e'$(respectively) lie in $yt_x$. Since we are analyzing
  paths in the {\em far case},  detours of $P$ and $P'$ meet $xt$ path in $t_xt$ subpath.
  Since both $e$ and $e'$ lie on $yt_x$ path,
  this implies that both $P$ and $P'$ avoid $e$ and $e'$.
  Thus, we will choose the smaller of the two paths as the preferred path avoiding both
  $e$ and $e'$. Else if $|P| = |P'|$, then we will choose that path which leaves the $xt$
  path as early as possible. Thus, there is only one preferred path avoiding both
  $e$ and $e'$, a contradiction.
  \end{proof}
\fi

\noindent The above lemma implies that $|\TON| = \cup_{xy \in \text{\SBFS($t$)}} |\TON(xy)| =
O(\sigma) = O(\sqrt{n \sigma})$.

\iflong
\else
\vspace{-2mm}
\fi
\section{Analysing preferred replacement paths in $\TTW$}
\label{sec:multi2}


We first show that one special kind of path will never lie in $\TTW$.
This characterization will help in analyzing bad paths in $\TTW$.

\begin{lemma}
\label{lem:feature}
Let $P$ be a preferred path from $x$ to $t$ avoiding $e$ on $xt$ path.
If $P$ merges with any segment $x'y'$ and then diverges from $x't$ path, then $P \notin \TTW$.
\end{lemma}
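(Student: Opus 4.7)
The plan is a proof by contradiction: I would assume $P\in\TTW$ and produce a replacement path from $x$ to $t$ avoiding $e$ that either is strictly shorter in $G_p$ than $P$ with the same divergence point, or diverges from $xt$ strictly earlier than $P$; either outcome contradicts $P$ being preferred. Throughout, let $\DET(P)$ leave $xt$ at some $a$ (strictly inside $xy$, above $e$) and rejoin at $b$ (below $e$), and suppose $\DET(P)$ enters segment $x'y'$ at $z_1$, travels along $x'y'\subset x't$, and leaves $x't$ at $z_2$, with $z_1,z_2$ internal to $x'y'$.

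My first construction shortcuts the tail of $P$ along $\BFS(t)$:
\[ Q \;=\; xa \,\conc\, az_2 \,\conc\, z_2 t, \]
where $xa$ is a prefix of $xt$, $az_2$ is the prefix of $\DET(P)$ up to $z_2$, and $z_2 t$ is the unique shortest $z_2$-to-$t$ path in $G_p$, which lies on $x't$. Because $P$ diverges from $x't$ at $z_2$, its tail $z_2 b\conc bt$ is a different path from $z_2$ to $t$ than $z_2 t$, so uniqueness of shortest paths in $G_p$ gives $|Q|_p<|P|_p$; note that $Q$ and $P$ share the divergence vertex $a$ from $xt$. If $Q$ avoids $e$, then $Q$ witnesses that $P$ is not the shortest in $G_p$ at divergence $a$, contradicting condition~(3) of the preferred-path definition.

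The substantive case is when $Q$ passes through $e$. Then $z_2 t$ on $x't$ uses $e$; since $e$ lies strictly inside segment $xy$ and every internal vertex of $xy$ has degree two in $\BFS(t)$, the path $x't$ can only enter segment $xy$ through the endpoint $x$. Hence $x$ itself lies on $x't$ strictly between $z_2$ and $y$. Exploiting this I would build a second alternative
\[ R \;=\; xz_2 \,\conc\, z_2 b \,\conc\, bt, \]
where $xz_2$ traces $x't$ \emph{backwards} from $x$ down to $z_2$, $z_2 b$ is the suffix of $\DET(P)$, and $bt$ is on $xt$. The first leg lies in the subtree of $\BFS(t)$ rooted at $x$ opposite to $t$, so it meets $xt$ only at $x$; the detour suffix $z_2 b$ meets $xt$ only at $b$; and the final leg is on $xt$. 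So $R$ diverges from $xt$ exactly once, at $x$. Every leg avoids $e$: the first because $e$ lies on the $t$-side of $x$ on $x't$ while this leg goes to the $x'$-side; the second because $\DET(P)$ avoids $e$; the third because $b$ lies below $e$ on $xt$. Thus $R$ is a valid replacement path avoiding $e$ whose divergence from $xt$ is at $x$ itself, strictly above $a$, violating condition~(2) of the preferred-path definition.

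The main obstacle is exactly this second case: the naive shortcut $Q$ can fail to avoid $e$, but that very failure forces $x$ to reappear on $x't$, which in turn is what makes the reverse-direction shortcut $R$ well-defined and guarantees that its divergence point is strictly above the one $P$ uses. Once the two constructions are in place, one of the two preferred-path conditions is violated in each case, completing the contradiction and establishing $P\notin\TTW$.
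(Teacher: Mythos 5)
Your proposal is correct and follows essentially the same route as the paper: both arguments hinge on the observation that $x$ must reappear on $x't$ (because $x't$ must contain $e$), and then shortcut $P$'s prefix along $x't$ to produce a replacement path whose detour begins at $x$ itself, contradicting condition~(2) of the preferred-path definition. Your "case 1" simply makes explicit a step the paper asserts without argument — that a preferred path would not diverge from $x't$ at all unless $x't$ contained $e$ — but the substance of the contradiction is the same.
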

\iflong
  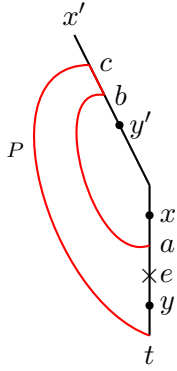
\begin{figure}[hpt!]
\centering
\begin{tikzpicture}[scale=2]
\begin{scope}
\coordinate (s) at (-0.5,2);
\coordinate (s1) at (0.5,2);
\coordinate (w) at (0,1);
\coordinate (t) at (0,0);
\coordinate (x) at (0,0.8);
\coordinate (y) at (0,0.2);
\coordinate (y1) at (-0.2,1.4);
\coordinate (i1) at (-0.4,1.8);
\coordinate (i2) at (-0.3,1.6);
\coordinate (a) at (0,0.6);
\coordinate (v) at (0,0.4);

\draw[thick](s)--(w);
\draw[thick](w)--(t);
\node[above] at (s){$x'$};
\node[below] at (t){$t$};

\node[right] at (a){$a$};

\node[right] at (i1){$c$};
\node[right] at (i2){$b$};

\draw[red,thick] (a) to[out=200,in=170]
(i2);
\draw[red,thick](i1)--(i2);
\draw[red,thick] (i1) to[out=180,in=160]node[pos=0.4,left,black]{\scriptsize{$P$}}  (t);
\node at (v){$\times$};
\node[right] at (v){$e$};
\draw (y) node[fill,circle,scale=0.3]{};
\node[right] at (y){$y$};
\draw (x) node[fill,circle,scale=0.3]{};
\node[right] at (x){$x$};
\draw (y1) node[fill,circle,scale=0.3]{};
\node[right] at (y1){$y'$};
\end{scope}
\end{tikzpicture}
\caption{The path $P$ merges with another segment $x'y'$  and then diverges.}
\label{fig:feature}

\end{figure}
  \begin{proof}
  Once $P$ merges with $x'y'$ segment, it will diverge from it only if
  $x't$ contains $e$. This implies that $xt$ and $x't$ intersect or $xt$
  is a subpath of $x't$.

  \noindent Assume for contradiction that $P \in \TTW$, that is, $\DET(P)$
  start strictly inside segment $xy$. Consider the Figure \ref{fig:feature} in which
  the detour of the preferred replacement path $P$
  starts after $x$ (at $a$). It then intersect  $x'y'$ at $b$
  and then diverges from $x't$ path at $c$.
  We claim that there exists another path to reach $b$ that is shorter  than
  $xa \conc ab$. This path is $xb$, where path $xb$ is a subpath of $x't$.
  It is also a shorter path (in $G_p$) as it uses
   edges in original $x't$ path.

  This path (1) diverges at $x$ and (2) is shorter (in $G_p$) as it uses edges from $x't$ path.
  Thus, there is a shorter replacement path than $P$ that avoids $e$. This path is not in
  $\TTW$ as its detour starts at $x$  .
  This leads to a contradiction as we had assumed that $P$ is the preferred replacement path avoiding $e$.
  Thus our assumption,
  namely that $P \in \TTW$ must be false.
  \end{proof}
\fi

\noindent We will now analyze  paths in $\TTW$. Consider two replacement
paths $P, P'$ avoiding edges $e,e'$ (respectively) on $xy,x'y'$ segment respectively.
Let  $a,a'$ be the starting vertex of $\DET(P), \DET(P')$ respectively.
We say that $P \prec P'$ if $|at| < |a't|$.
If $|at| = |a't|$, then the tie is broken arbitrarily.

\noindent  Given a path $P \in \TTW$, let $(< P)$ be the set of all replacement
paths in $\TTW$ that are $\prec P$ in the ordering. Similarly, $(> P)$ is
the set of all replacement paths $P' \in \TTW$ for which $P \prec P'$.
Define $\UNQ(P)$ according to this ordering (see definition \ref{def:unique}).
Assume that we are processing a replacement path $P$ according to this ordering.
If $|\UNQ(P)| \ge \sqrt{n/\sigma}$, then we can associate $O(\sqrt{n/\sigma})$
{\em unique} vertices to $P$. Otherwise $|\UNQ(P)| < \sqrt{n/\sigma}$ and we have the following two cases:

\iflong
\else
\vspace{-2mm}
\fi

\subsection{ $\DET (P)$ does not intersect with any other detour in $(> P)$ }
\label{subsec:nointersect}
\noindent This case is similar to the first case in Section \ref{subsec:singlecaseone}.
\iflong
  Assume that $P$ avoids an edge $e$ on segment $xy$. Let $\DET(P)$
  start at $a \in xy$ and end at $b$ -- the vertex
  where it touches $t_xt$ path. Let $ab$ denote the
path from $a$ to $b$ on $P$. By our assumption $\UNQ(P)
  = ab$ and $|ab| < \sqrt{n/\sigma}$. Consider the following set
  of replacement paths  $(< P)_x := \{P' \in (< P)\ |\ P'~ \text{avoids an edge on $xy$ segment}\}$.
  By Lemma \ref{lem:avoids},
  all  replacement paths in $(< P)_x$ pass through $e$ (as
  detour of these replacement paths start below $e$) and by
  Lemma \ref{lem:avoidreverse}, these replacement paths avoid
  edges that are closer to $y$ than $e$. We can view these
  replacement paths as if they are starting from vertex $a$.
   $|P \setminus xa|
  = |ab| + |bt| \le \sqrt {n/\sigma} + |at|$. Using lemma \ref{lem:length}, total
  number of paths in   $(\le P)_x$ is  $\le \sqrt{ n/\sigma}$.  Thus, once we
  get a replacement path $P \in \TTW(xy)$ with $|\UNQ(P)| < \sqrt{n /\sigma}$, then there
  are at most $\sqrt{n/\sigma}$ replacement paths in $\TTW(xy)$ remaining to be processed.
  Thus, total number of paths in $\TTW$ with  $|\UNQ(P)|< \sqrt{n/\sigma}$ is
  $\sum_{xy \in \text{\SBFS($t$)}} \sqrt{n/ \sigma} = O(\sqrt{n \sigma})$
  (as there are $O(\sigma)$ segments in \SBFS($t$))
\else
   We can show that once we
   get a replacement path $P \in \TTW(xy)$ with $|\UNQ(P)| < \sqrt{n /\sigma}$, then there
   are at most $O(\sqrt{n/\sigma})$ replacement paths in $\TTW(xy)$ remaining to be processed. This will bound
   the total number of such paths to $O(\sqrt{n\sigma})$. Please see
   the full version for details.
\fi

\iflong
\else
\vspace{-2mm}
\fi
\subsection{ $\DET(P)$ intersects with  detour of a  path in $(> P)$}
\label{subsec:pintersects}

We first give a formal definition of a bad path that was defined informally in Section \ref{sec:problem}.
\begin{definition} (Bad Path)
A path $P \in \TTW$ is called a bad path if there exists another path $P' \in (>P)$
such that (1) $\DET(P)$ intersects with $\DET(P')$ and (2) $\DET(P')$ passes through the edge
avoided by $P$ after their intersection. We also say that $P$ is a bad replacement
path due to $P'$ if $P'$ satisfies the above two conditions.
\end{definition}

A path that is not bad is called a good path. In Section \ref{sec:problem}, we saw that
bad paths break the easy analysis of the single source case.
So, we have two cases depending on whether the path is good or bad. Let us look at the easier case first.\iflong\\\else\vspace{1mm}\fi

\iflong
\else
\iflong
\begin{figure}[hpt!]
\else
\begin{wrapfigure}[]{r}{.4\textwidth}
\fi
\centering
\begin{tikzpicture}[scale=2]

\definecolor{dgreen}{rgb}{0.0, 0.5, 0.0}
\begin{scope}[xshift=0cm]
\coordinate (s) at (0,2);
\coordinate (t) at (0,0);
\coordinate (ts) at (0,0.4);
\coordinate (b1) at (0,.2);
\coordinate (y) at (0,1);
\coordinate (a1) at (0,1.8);
\coordinate (a) at (0,1.4);
\coordinate (v) at (0,1.2);
\coordinate (v1) at (0,1.6);
\coordinate (c) at (-0.558,0.7);

\draw[thick](s)--(t);
\node[above] at (s){$x=x'$};
\node[below] at (t){$t$};
\node[right] at (a1){$a'$};
\node[right] at (a){$a$};
\node[right] at (b1){$b'$};
\node[left] at (c){$c$};

\draw[blue,thick] (a1) to[out=180,in=180,distance=.8cm]
node[pos=0.3,left]
{\scriptsize  $P'$}  (b1);

\draw[red,thick] (a) to[out=180,in=80]
node[pos=0.4,above]
{\scriptsize  $P$}  (c);

\node at (v1){$\times$};
\node[right] at (v1){$e'$};

\node at (v){$\times$};
\node[right] at (v){$e$};

\node at (ts){$\times$};
\node[right] at (ts){$t_x$};

\draw (y) node[fill,circle,scale=0.3]{};
\node[right] at (y){$y$};

\end{scope}

\begin{scope}[xshift=1.7cm]
\coordinate (s) at (-0.5,2);
\coordinate (s1) at (0.5,2);

\coordinate (t) at (0,0);
\coordinate (ts) at (-0.15,0.6);
\coordinate (b1) at (-0.1,0.4);
\coordinate (y1) at (-0.32,1.3);
\coordinate (y) at (0.32,1.3);
\coordinate (a1) at (-0.45,1.8);
\coordinate (a) at (0.44,1.8);
\coordinate (v) at (0.4,1.6);
\coordinate (v1) at (-0.395,1.6);
\coordinate (c) at (-0.65,0.66);

\draw[thick](s)--(t);
\draw[thick](s1)--(t);
\node[above] at (s){$x'$};
\node[above] at (s1){$x$};
\node[below] at (t){$t$};
\node[right] at (a1){$a'$};
\node[right] at (a){$a$};
\node[right] at (b1){$b'$};
\node[left] at (c){$c$};

\draw[blue,thick] (a1) to[out=180,in=180,distance=.8cm]
node[pos=0.3,left]
{\scriptsize  $P'$}  (b1);

\draw[red,thick] (a) to[out=180,in=10]
node[pos=0.9,above]
{\scriptsize  $P$}  (c);

\node at (v1){$\times$};
\node[right] at (v1){$e'$};

\node at (v){$\times$};
\node[right] at (v){$e$};

\node at (ts){$\times$};
\node[right] at (ts){$t_{x'}$};

\draw (y1) node[fill,circle,scale=0.3]{};
\node[right] at (y1){$y'$};

\draw (y) node[fill,circle,scale=0.3]{};
\node[right] at (y){$y$};

\end{scope}

\end{tikzpicture}

\caption{The figure shows two representative examples when $ca$ and $at$ does not pass through $e'$.}
\label{fig:examples}
\iflong
\end{figure}
\else
\end{wrapfigure}
\fi
\fi
\iflong\iflong
\begin{figure}[hpt!]
\else
\begin{wrapfigure}[]{r}{.4\textwidth}
\fi
\centering
\begin{tikzpicture}[scale=2]

\definecolor{dgreen}{rgb}{0.0, 0.5, 0.0}
\begin{scope}[xshift=0cm]
\coordinate (s) at (0,2);
\coordinate (t) at (0,0);
\coordinate (ts) at (0,0.4);
\coordinate (b1) at (0,.2);
\coordinate (y) at (0,1);
\coordinate (a1) at (0,1.8);
\coordinate (a) at (0,1.4);
\coordinate (v) at (0,1.2);
\coordinate (v1) at (0,1.6);
\coordinate (c) at (-0.558,0.7);

\draw[thick](s)--(t);
\node[above] at (s){$x=x'$};
\node[below] at (t){$t$};
\node[right] at (a1){$a'$};
\node[right] at (a){$a$};
\node[right] at (b1){$b'$};
\node[left] at (c){$c$};

\draw[blue,thick] (a1) to[out=180,in=180,distance=.8cm]
node[pos=0.3,left]
{\scriptsize  $P'$}  (b1);

\draw[red,thick] (a) to[out=180,in=80]
node[pos=0.4,above]
{\scriptsize  $P$}  (c);

\node at (v1){$\times$};
\node[right] at (v1){$e'$};

\node at (v){$\times$};
\node[right] at (v){$e$};

\node at (ts){$\times$};
\node[right] at (ts){$t_x$};

\draw (y) node[fill,circle,scale=0.3]{};
\node[right] at (y){$y$};

\end{scope}

\begin{scope}[xshift=1.7cm]
\coordinate (s) at (-0.5,2);
\coordinate (s1) at (0.5,2);

\coordinate (t) at (0,0);
\coordinate (ts) at (-0.15,0.6);
\coordinate (b1) at (-0.1,0.4);
\coordinate (y1) at (-0.32,1.3);
\coordinate (y) at (0.32,1.3);
\coordinate (a1) at (-0.45,1.8);
\coordinate (a) at (0.44,1.8);
\coordinate (v) at (0.4,1.6);
\coordinate (v1) at (-0.395,1.6);
\coordinate (c) at (-0.65,0.66);

\draw[thick](s)--(t);
\draw[thick](s1)--(t);
\node[above] at (s){$x'$};
\node[above] at (s1){$x$};
\node[below] at (t){$t$};
\node[right] at (a1){$a'$};
\node[right] at (a){$a$};
\node[right] at (b1){$b'$};
\node[left] at (c){$c$};

\draw[blue,thick] (a1) to[out=180,in=180,distance=.8cm]
node[pos=0.3,left]
{\scriptsize  $P'$}  (b1);

\draw[red,thick] (a) to[out=180,in=10]
node[pos=0.9,above]
{\scriptsize  $P$}  (c);

\node at (v1){$\times$};
\node[right] at (v1){$e'$};

\node at (v){$\times$};
\node[right] at (v){$e$};

\node at (ts){$\times$};
\node[right] at (ts){$t_{x'}$};

\draw (y1) node[fill,circle,scale=0.3]{};
\node[right] at (y1){$y'$};

\draw (y) node[fill,circle,scale=0.3]{};
\node[right] at (y){$y$};

\end{scope}

\end{tikzpicture}

\caption{The figure shows two representative examples when $ca$ and $at$ does not pass through $e'$.}
\label{fig:examples}
\iflong
\end{figure}
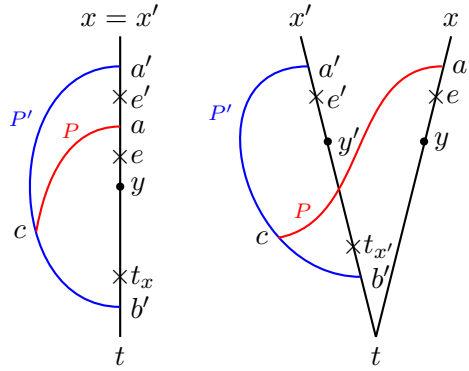
\else
\end{wrapfigure}
\fi\fi

\noindent {(1) \em $P$ is a good path.}

\noindent Assume that $P \in \TTW(xy)$ and it avoids an edge $e \in xy$.
Assume that $P$  intersects first with $P' \in (> P)$ and $P'$ avoids $e'$ on $x'y'$ segment.
Note that $x$ may be equal to $x'$.
Let $\DET(P')$ start at $a'$ and end at
$b'$. Assume that $\DET(P)$ starts at $a$ and it intersects $\DET(P')$
at $c$.
Consider the path $x'a' \conc a'c \conc ca \conc at$. Since $x'a' \conc a'c$ is a part of $P'$,
it avoids $e'$. However, it is not clear whether $ca \conc at$ avoids $e'$ too.
In Figure \ref{fig:examples}, we see two representative examples in which $ca$ and $at$
avoid $e'$.
\iflong
  We will now show that both $ca$   and $at$ cannot passes through $e'$.
  \begin{enumerate}
   \item[(a)] Assume that  $ca$ passes through $e'$ 
    \label{enum:goodcase}

  If $x=x'$, then by Lemma \ref{lem:avoids}, as $P' \in (>P)$, $\DET(P)$ (and hence $ca$)
   starts below $e'$. Thus, $ca$ cannot pass through $e'$ as $\DET(P)$ does not
  intersect $xt_x$ path and $e' \in xt_x$. So let us assume that $x \neq x'$.
  This implies that $P$ intersects $x'y'$. After intersecting $x'y'$ path
    $P$ did not follow $x'y' \conc y't$ (since $ca$ intersect $\DET(P')$ at $c$).
    This implies that $P$ intersect with another path $x'y'$ and then
    diverges. By Lemma \ref{lem:feature}, $P \notin \TTW$.
    This leads to a contradiction as we assumed that $P \in \TTW$. Thus our
    assumption, namely $ca$ passes through
    $e'$ is false.



  \item[(b)] Assume that $at$ passes through $e'$ 
  \label{enum:one}

  If $x = x'$, then by Lemma \ref{lem:avoids}, starting vertex of $\DET(P)$, that is $a$, starts below $e'$. Thus,
  $at$ cannot pass through $e'$. So let us assume that $x \neq x'$.
  If $at$ passes through $e'$, then segment $x'y'$ is a subpath of $xt$ path.
  This is due to the fact that $a \in xy$ and $e' \in x'y'$. Since $P' \in \TTW(x'y')$,
  $\DET(P')$ starts strictly inside segment $x'y'$, at vertex $a'$. This implies that $|a't| < |at|$.
  This contradicts our assumption that $P' \in (>P)$. Thus, $at$ cannot pass through $e'$.


  \end{enumerate}
\fi


\iflong\else In the full version of the paper, we show that $ca$ and $at$ cannot pass through $e'$. \fi
Thus, the path $x'a' \conc a'c \conc ca \conc at$ is indeed a valid replacement
path from $x'$ to $t$ avoiding $e'$. Since $P' = x'a' \conc a'c \conc cb' \conc b't$, length
of $P'$ must be $\le$ length of  this alternate path. Thus,\\
\begin{tabular}{llll}
& $|x'a'| +  |a'c|
+ |cb'| + |b't|$& $\le$ & $|x'a'| + |a'c| + |ca|  + |at|$  \\
$\implies$& $
|cb'| + |b't|$ & $\le$ & $|ca|  + |at|$   \\
$\implies$&$|ac| +
|cb'| + |b't|$ & $\le$ &  $2|ca|  + |at|$

\end{tabular}

On the left hand of the inequality, we  have a path from
$a$ to $t$ avoiding $e$ (since we know that $P$ is a good path, so $P'$ and thus $cb' \conc b't$ does not
pass through $e$). So, its length should be $\ge$
length of the preferred path $P \setminus xa$. Thus $|P
\setminus xa| \le 2|ca|  + |at| \le 2\sqrt {n/\sigma}\
+\ |at|$ (since $|\UNQ(P)| = |ac| < \sqrt{n/\sigma}$). Consider the following set of replacement paths  $(< P)_x
:= \{P' \in (< P)\ |\ P' \ \text{avoids
an edge on $xy$ segment}\}$. By Lemma \ref{lem:avoids},
all  replacement paths in $(<P)_x$ pass through $e$ (as
detour of these replacement paths start below $e$) and by
Lemma \ref{lem:avoidreverse}, these replacement paths avoid
edges that are closer to $y$ than $e$.
  Applying Lemma \ref{lem:length}, we get
that  the number of replacement paths $(<P)_x$  is $ \le 2\sqrt{n/\sigma}$. Thus, once we
get a replacement path $P \in \TTW(xy)$ with $|\UNQ(P)| < \sqrt{n /\sigma}$, then there
are at most $2\sqrt{n/\sigma}$ replacement paths in $\TTW(xy)$ remaining to be processed.
Thus, total number of paths $\in \TTW$ with  $|\UNQ(P)| < \sqrt{n/\sigma}$ is
$\sum_{xy \in \text{\SBFS($t$)}} 2\sqrt{n/\sigma} = O(\sqrt{n \sigma})$
(as there are $O(\sigma)$ segments in \SBFS($t$)).\iflong\\\else\vspace{2mm}\fi

\iflong
   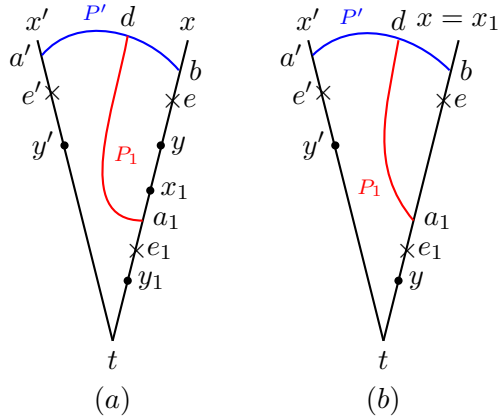
\begin{figure}[hpt!]
\centering
\begin{tikzpicture}[scale=2]
\begin{scope}
\coordinate (s) at (0.5,2);
\coordinate (s1) at (-0.5,2);
\coordinate (t) at (0,0);
\coordinate (ts) at (0.15,0.4);
\coordinate (b) at (0.44,1.8);
\coordinate (y1) at (-0.32,1.3);
\coordinate (y) at (0.32,1.3);

\coordinate (x2) at (0.25,1);
\coordinate (y2) at (0.1,0.4);

\coordinate (a1) at (-0.47,1.9);
\coordinate (a2) at (0.2,0.8);
\coordinate (v) at (0.4,1.6);
\coordinate (v1) at (-0.4,1.65);
\coordinate (v2) at (0.16,0.6);
\coordinate (c) at (-0.7,0.66);
\coordinate (d) at (0.1,2.03);

\draw[thick](s)--(t);
\draw[thick](s1)--(t);
\node[above] at (s){$x$};
\node[above] at (s1){$x'$};
\node[below] at (t){$t$};
\node[left] at (a1){$a'$};
\node[right] at (b){$b$};

\node[above] at (d){$d$};

\draw[blue,thick] (a1) to[out=50,in=130]
node[pos=0.4,above]
{\scriptsize  $P'$}  (b);

\draw[red,thick] (a2) to[out=180,in=260]
node[pos=0.5,right]
{\scriptsize  $P_1$}  (d);

\node at (v1){$\times$};
\node[left] at (v1){$e'$};

\node[right] at (a2){$a_1$};

\node at (v){$\times$};
\node[right] at (v){$e$};

\node at (v2){$\times$};
\node[right] at (v2){$e_1$};


\draw (y1) node[fill,circle,scale=0.3]{};
\node[left] at (y1){$y'$};

\draw (y) node[fill,circle,scale=0.3]{};
\node[right] at (y){$y$};

\draw (x2) node[fill,circle,scale=0.3]{};
\node[right] at (x2){$x_1$};

\draw (y2) node[fill,circle,scale=0.3]{};
\node[right] at (y2){$y_1$};

\node at (0,-0.4){$(a)$};

\end{scope}
\begin{scope}[xshift=1.8cm]
\coordinate (s) at (-0.5,2);
\coordinate (s1) at (0.5,2);

\coordinate (t) at (0,0);
\coordinate (ts) at (0.1,0.4);
\coordinate (b) at (0.44,1.8);

\coordinate (a2) at (0.2,0.8);
\coordinate (d) at (0.1,2);
\coordinate (y1) at (-0.32,1.3);
\coordinate (y) at (0.1,0.4);
\coordinate (a1) at (-0.47,1.9);
\coordinate (a) at (0.1,0.8);
\coordinate (v) at (0.4,1.6);
\coordinate (v2) at (0.16,0.6);
\coordinate (v1) at (-0.4,1.65);
\coordinate (c) at (-0.7,0.66);

\draw[thick](s)--(t);
\draw[thick](s1)--(t);

\node[above] at (s){$x'$};
\node[above] at (s1){$x=x_1$};
\node[below] at (t){$t$};
\node[left] at (a1){$a'$};

\node[right] at (b){$b$};
\node[right] at (a2){$a_1$};
\node[above] at (d){$d$};

\draw[blue,thick] (a1) to[out=45,in=135]
node[pos=0.3,above]
{\scriptsize  $P'$}  (b);

\draw[red,thick] (a2) to[out=130,in=260]
node[pos=0.2,left]
{\scriptsize  $P_1$}  (d);

\node at (v1){$\times$};
\node[left] at (v1){$e'$};

\node at (v2){$\times$};
\node[right] at (v2){$e_1$};

\node at (v){$\times$};
\node[right] at (v){$e$};

\draw (y1) node[fill,circle,scale=0.3]{};
\node[left] at (y1){$y'$};

\draw (y) node[fill,circle,scale=0.3]{};
\node[right] at (y){$y$};

\node at (0,-0.4){$(b)$};
\end{scope}
\end{tikzpicture}
\caption{Two cases in which $P'$ passes through $e_1$ after intersecting with $P_1$}
\label{fig:badexample}

\end{figure}
\fi
\noindent  {(2) \em $P$ is a bad path.}
\label{enum:two}

\noindent We now arrive at our hardest scenario.
We will first show that the number of good paths in $\TTW$
is greater than
the number of bad paths in $\TTW$.
To this end, we will prove the following lemma:
\begin{lemma}
\label{lem:badpaths}
For each $P' \in \TTW$, there exists only one replacement
path $P \in \TTW$ which is bad
due to $P'$.
\end{lemma}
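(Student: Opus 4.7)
The plan is to argue by contradiction. Suppose two distinct replacement paths $P_1, P_2 \in \TTW$ are both bad due to the same $P' \in \TTW$, where $P_i \in \TTW(x_i y_i)$ avoids $e_i$, its detour $\DET(P_i)$ starts at $a_i$ and first meets $\DET(P')$ at $c_i$. My strategy is to combine a structural consequence of Lemma \ref{lem:feature} with the explicit construction of an alternate replacement path from $x_2$ to $t$ avoiding $e_2$ whose detour diverges from $x_2 t$ strictly earlier than $a_2$. Such a path contradicts the preferredness of $P_2$, because $P_2 \in \TTW(x_2 y_2)$ requires its detour to start strictly inside segment $x_2 y_2$, hence strictly after $x_2$.

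For the structural step, since $\DET(P')$ traverses $e_1 \in x_1 y_1$, the path $P'$ merges with segment $x_1 y_1$; as $P' \in \TTW$, the contrapositive of Lemma \ref{lem:feature} forbids $P'$ from diverging from the $x_1 t$ path after this merging, so $\DET(P')$ is pinned to $x_1 t$ from $e_1$ onward, up to the point $b'$ where it rejoins $x' t$. The same is true for $e_2$ and segment $x_2 y_2$. Without loss of generality $e_1$ precedes $e_2$ on $\DET(P')$; then $e_2$ must itself lie on $x_1 t$, so the whole segment $x_2 y_2$ sits on the $x_1 t$-path in $\SBFS(t)$. In particular $x_2$ lies on $x_1 t$ between $e_1$ and $e_2$, the path $x_2 t$ is a suffix of $x_1 t$, and $t_{x_2} = t_{x_1}$; moreover $e_2 \in x_2 t_{x_2}$ lies above $t_{x_1}$ on $x_1 t$.

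For the contradiction, I will define $\Pi$ by concatenating: (i) $\DET(P')$ reversed from $x_2$ to $c_1$, (ii) $\DET(P_1)$ forward from $c_1$ to $b_1$, and (iii) the $x_1 t$ sub-path from $b_1$ to $t$. The first edge of piece (i) lies on the segment of $x_1 t$ immediately preceding $x_2 y_2$ (and not on $x_2 y_2$), so $\Pi$ diverges from $x_2 t$ at $x_2$, strictly before $a_2$. To see $\Pi$ avoids $e_2$: piece (i) lies strictly before $x_2$ on $\DET(P')$ and hence before $e_2$; piece (ii), being the detour of a preferred path, is edge-disjoint from $x_1 t$ and so cannot use $e_2 \in x_1 t$; and piece (iii) lies on $t_{x_1} t$ below $t_{x_1}$, whereas $e_2$ sits above $t_{x_1}$, so the tail misses $e_2$. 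Thus $\Pi$ is a replacement path from $x_2$ to $t$ avoiding $e_2$ whose divergence is at $x_2$, contradicting $P_2 \in \TTW(x_2 y_2)$.

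The hard part will be verifying the single-divergence/single-merge property that qualifies $\Pi$ as a candidate preferred path: pieces (i) and (ii) could a priori touch $x_2 t$ at interior vertices, which would break the single-merge condition. I expect to handle this by using the preferred-path property of $P_1$ (so that $\DET(P_1)$ is internally disjoint from $x_1 t$, hence from $x_2 t$) together with Lemma \ref{lem:feature} applied once more to $P'$ (which restricts how the pre-$e_1$ portion of $\DET(P')$ can interact with other segments, in particular precluding it from touching $x_2 t$ which lies below $x_2$ on $x_1 t$). A short case analysis then allows one to truncate or re-route $\Pi$ if needed to obtain a valid preferred-candidate with divergence at $x_2$, completing the contradiction and establishing uniqueness.
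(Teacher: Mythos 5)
Your structural setup (using Lemma \ref{lem:feature} to pin $\DET(P')$ to the $x_1t$ path from $e_1$ onward, and deducing that $x_2y_2$ sits on $x_1t$ with $x_2$ between $e_1$ and $e_2$) is sound and parallels the first half of the paper's argument. But the contradiction you aim for does not go through. You displace $P_2$ solely on the ground that $\Pi$ diverges from $x_2t$ earlier than $P_2$ does, with no comparison of lengths. Although the definition of a preferred path lists the divergence condition before the length condition, every place the paper actually invokes preferredness (the proofs of Lemma \ref{lem:avoids}(3) and Lemma \ref{lem:allsame}, and the step ``the preferred path used the first alternative, hence $|x_1a_1|+|a_1d| < |x_1b|+|bd|$'' inside the paper's own proof of this very lemma) treats length as primary and the divergence point only as a tie-breaker; under a strict divergence-first rule Lemma \ref{lem:avoids} would collapse, since the path avoiding the higher edge always diverges earlier and would then always be preferred for the lower edge as well. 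So to contradict the preferredness of $P_2$ you must show $|\Pi| \le |P_2|$, and your $\Pi$ --- which climbs from $x_2$ up through $e_1$, runs backward along $\DET(P')$ to $c_1$, and then forward along all of $\DET(P_1)$ --- admits no such bound; it can be arbitrarily longer than $P_2$.

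This is precisely why the paper attacks $P'$ rather than $P_2$: it uses the preferredness of $P_1$ \emph{quantitatively}, comparing $P_1$'s route $x_1a_1 \conc a_1d$ against the earlier-diverging route $x_1b \conc bd$ to extract the strict inequality $|a_1d| < |db|+|bx_1|-|x_1a_1|$, and then splices $a'd \conc da_1 \conc a_1t$ into $P'$ to obtain a replacement path for $e'$ that is strictly shorter than $P'$, contradicting the minimality of $P'$ --- an argument that needs no divergence-point comparison at all. Your proposal contains no quantitative step playing this role. Separately, even if one insisted on the literal divergence-first reading of the definition, you explicitly defer the verification that $\Pi$ diverges from and merges with $x_2t$ exactly once to an unspecified ``truncate or re-route'' case analysis; since that is exactly condition (1) of the definition of a preferred path, the proof is incomplete on that reading as well.
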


\iflong
  \begin{proof}

  Assume that $P$ is the preferred path
  from $x$ to $t$ avoiding $e$ on segment $xy$. Similarly, $P'$ is the preferred path from $x'$ to $t$ avoiding $e'$ on segment $x'y'$ and $P$ is bad due to $P'$. Assume for contradiction that there is one more replacement
  path $P_1$ which is bad
  due to $P'$.
  Let us assume that $P_1$ avoids $e_1$ on $x_1y_1$ segment.
  If $x' = x$ or $ x'= x_1$, then $\DET(P')$ cannot pass through
  $e$ or $e_1$ respectively as
  $\DET(P')$ starts before $e$ or $e_{1}$ (as $P' \in (>P)$ or $P' \in (>P_1)$) and touches $x't$ path only at $t_{x'}t$. So, let us assume
  that $x' \neq x$ and $ x' \neq x_1$.

  Since $P' \in \TTW$, by lemma \ref{lem:feature}, we know
  that it follows $xt$ path after hitting segment $xy$, say at $b$. This
  implies
  that $e_1$ also lies on the $xt$ path. Thus, $xt$ and $x_1t$
  intersect. Without loss of generality, assume that  the segment $x_1y_1$ is
  a subpath of $xt$.
  Let us assume that $\DET(P_1)$ starts at $a_1$ and it
  hits $P'$ at $d$.
  There are two ways for $P_1$ to reach $d$ (both avoiding
  $e_1$):
  $x_1a_1 \conc a_1d$ and $x_1b \conc bd$
  where the first path is using the detour of $P_1$ and the second path
  uses $xt$ path to reach $b$.

  Note that the second path leaves the $x_1t$ path earlier than the
  first path ($x_1$ compared to $a_1$ if $x \neq x_1$ (Figure  \ref{fig:badexample}(a)) and
  $b$ compared to $a_1$ if $x=x_1$ (Figure  \ref{fig:badexample}(a))). Even then the preferred
  path used the first alternative.
  This implies that the length of the first path
  must be {\em strictly} less than the second.  Thus,
  $|x_1a_1| + |a_1d| < |x_1b| + |bd|$. Thus,
  \begin{equation}
  |a_1d| < |db| + |bx_1| - |x_1a_1|
  \end{equation}
  $P'$ takes the following path
  $x'a' \conc a'b \conc bt$.
  But there is  an alternative path available for $P'$, it
  is $x'a' \conc a'd \conc da_1 \conc a_1t$.
  The path is a valid path avoiding $e'$ only if $a_1d$
  does not pass through $e'$. All other components of this
  path are a part of $P'$ ($a'd \in a'b$ and $a_1t \in bt$)
  .

  If $a_1d$ does not pass through $e'$, then we can show that
  the alternative path has a length less than $|P'|$, thus arriving
  at a contradiction.
  Consider the length of the alternative path:
  \begin{tabbing}
   $|x'a'| $\=$+ |a'd| + |da_1| + |a_1t|$\\
   \>$< |x'a'|+ |a'd| + |db|  + |bx_1| - |x_1a_1| + |a_1t|$
  \hspace{10mm}\ (Using  Equation (1))\\\\
  If $x \neq x'$(See Figure  \ref{fig:badexample}(a)), then
  $ |bx_1|  - |x_1a_1|
  + |a_1t| \le  |bx_1|  + |x_1a_1| + |a_1t| = |bt|$. Else if $x
  =x'$ \\
  (See Figure  \ref{fig:badexample}(b)), then $ |bx_1|  - |x_1a_1|
  + |a_1t| = -|ba_1| +\ |a_1t| \le|ba_1| +\ |a_1t| =  |bt|$ \\\\
   \>$\le |x'a'|+ |a'd| + |db| + |bt|$\\
  \>$= |x'a'|+ |a'b| + |bt|$\\
   \>$= |P'|$
  \end{tabbing}

  This leads to a contradiction as we have assumed that $P'$
  is
  the shortest path from $x'$ to $t$ avoiding $e'$.

  To end this proof, we will show that $a_1d$ cannot pass
  through $e'$.
  Assume for contradiction that $a_1d$ passes through $e'$.
  This
  implies that $\DET(P_1)$ intersects with $x'y'$ segment (as
  $e' \in x'y'$) and
  then diverges from it (as $\DET(P_1)$ intersect with $\DET(P')$
  at $d$).
  By Lemma \ref{lem:feature}, $P_1 \notin \TTW$. But this cannot
  be the
  case as we have assumed that $P_1 \in  \TTW$. Thus our assumption,
  namely $a_1d$ passes through $e'$ must be false.





  \end{proof}
\fi







The above lemma can be used to discard bad paths from $\TTW$.
For each such discarded path, there exists at least one good path.
And by the above lemma, each such good path can be used to discard
at most one bad path. Thus the number of good paths in $\TTW$
is $\ge$ number of bad paths in $\TTW$.
We have already shown that the total number of good paths in $\TTW$
is $O(\sqrt{n\sigma})$.
Thus the total number of paths in $\TTW$ is also $O(\sqrt{n\sigma})$.


\iflong
\else
\vspace{-2mm}
\fi
\iflong
\else
\begin{figure}[hpt!]
\centering

\begin{tikzpicture}[scale=1.6]
\begin{scope}
\coordinate (s2) at (0.4,2);
\coordinate (s1) at (-0.4,2);
\coordinate (s3) at (1.1,2);
\coordinate (x) at (0,1.4);
\coordinate (y) at (0,0.8);
\coordinate (t) at (0,0);
\coordinate (ts) at (0.15,0.4);
\coordinate (b1) at (0,.2);

\coordinate (a1) at (0,1.2);
\coordinate (a2) at (0.13,1.6);
\coordinate (v) at (0,.6);

\coordinate (c) at (-0.7,0.66);
\coordinate (b2) at (0,0.1);

\draw[thick](s1)--(x);
\draw[thick](s3)--(y);
\draw[thick](s2)--(x);
\draw[thick](x)--(t);
\node[above] at (s2){$s_2$};
\node[above] at (s1){$s_1$};
\node[above] at (s3){$s_3$};
\node[below] at (t){$t$};

\node[left] at (y){$y$};
\node[right] at (x){$x$};

\draw[blue,thick] (a1) to[out=180,in=180,distance=.8cm]
node[pos=0.4,left]
{\scriptsize  $\TON(xy)$}  (b1);

\draw[red,thick] (a2) to[out=10,in=10]
node[pos=0.5,right]
{\scriptsize  $\TON(s_2x)$}  (b2);

\node at (v){$\times$};
\node[right] at (v){$e$};

\end{scope}

\end{tikzpicture}
\caption{The shortest path from $s_2$ to $t$ avoiding $e$ can be $\TON(xy)$ or $\TON(s_2x)$.}
\label{fig:heavylight}

\end{figure}
\fi
\section{Building the Data Structure}
\iflong
\else
\vspace{-4mm}
\fi
\label{sec:data}

Let us first recognize a potential problem in using $\TON(\cdot)$.
Let $s_1t$ and $s_2t$ path  meet at vertex $x$ (See Figure \ref{fig:heavylight}).
Another path $s_3t$
 meets $s_2t$ path at $y$ where $y$ is closer to $t$.
$\TON(s_2x)$ is the shortest path from $s_1$ to $t$ avoiding $e$ and $\TON(xy)$ is the shortest path from
$x$ to $t$ avoiding $e \in yt $. This immediately leads to the following problem. Assume that
the query is $\textsc{Q}(s_2,t,e)$ and the preferred path avoiding $e$ is in $\TON$.
Then there are two  candidate paths that avoid
$e$: one that goes from $s_2$ to the intersection vertex $x$ and then take
path  $\TON(xy)$ and the other $\TON(s_2x)$.
Thus, we need to check these two paths and return the minimum of the two. One can
make a bigger example in which there are $\sigma$ segments between $s_2$ and $t$
and thus we have to check $O(\sigma)$ path before we can answer the query. The problem appears
because we don't know from which segment the shortest path avoiding $e$ started its detour.
If this information is not there, then it seems that we have to look at all the segments between
$s_2$ ans $t$.
\iflong
\begin{figure}[hpt!]
\centering

\begin{tikzpicture}[scale=1.6]
\begin{scope}
\coordinate (s2) at (0.4,2);
\coordinate (s1) at (-0.4,2);
\coordinate (s3) at (1.1,2);
\coordinate (x) at (0,1.4);
\coordinate (y) at (0,0.8);
\coordinate (t) at (0,0);
\coordinate (ts) at (0.15,0.4);
\coordinate (b1) at (0,.2);

\coordinate (a1) at (0,1.2);
\coordinate (a2) at (0.13,1.6);
\coordinate (v) at (0,.6);

\coordinate (c) at (-0.7,0.66);
\coordinate (b2) at (0,0.1);

\draw[thick](s1)--(x);
\draw[thick](s3)--(y);
\draw[thick](s2)--(x);
\draw[thick](x)--(t);
\node[above] at (s2){$s_2$};
\node[above] at (s1){$s_1$};
\node[above] at (s3){$s_3$};
\node[below] at (t){$t$};

\node[left] at (y){$y$};
\node[right] at (x){$x$};

\draw[blue,thick] (a1) to[out=180,in=180,distance=.8cm]
node[pos=0.4,left]
{\scriptsize  $\TON(xy)$}  (b1);

\draw[red,thick] (a2) to[out=10,in=10]
node[pos=0.5,right]
{\scriptsize  $\TON(s_2x)$}  (b2);

\node at (v){$\times$};
\node[right] at (v){$e$};

\end{scope}

\end{tikzpicture}
\caption{The shortest path from $s_2$ to $t$ avoiding $e$ can be $\TON(xy)$ or $\TON(s_2x)$.}
\label{fig:heavylight}

\end{figure}
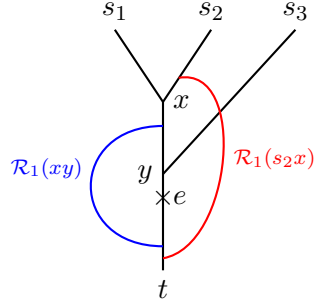
\fi
To end this dilemma, we use heavy light decomposition of \SBFS($t$) \cite{SleatorT83}.
For any segment $xy \in$ \SBFS($t$) (by our convention $y$ is closer to $t$), $x$ is a {\em heavy}
child of $y$ if the number of nodes in the subtree under $x$ is $\ge$ 1/2(number of
nodes in the subtree under $y$) else it is called a {\em light child} (or light {\em segment} in our case).
It follows
that each intersection vertex has exactly one heavy child and
each vertex is adjacent to atmost two heavy edges. A {\em heavy chain} is a concatenation of
heavy edges. A {\em heavy subpath} is a subpath of a heavy chain.
The following lemma notes a well known property of heavy-light decomposition.

\begin{lemma}
\label{lem:decomposition}
The path from a source $s$ to $t$ in \SBFS($t$) can be decomposed into $O(\log n)$ heavy subpaths
and light segments .
\end{lemma}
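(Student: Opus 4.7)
The plan is to follow the standard heavy-light potential argument of Sleator and Tarjan, adapted to our contracted tree \SBFS($t$). I would walk along the unique path from $s$ up to the root $t$ in \SBFS($t$) and classify each segment on this path as either light or heavy, then bound the number of each type separately.

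First I would focus on light segments. Suppose $xy$ is a light segment on the $s$-to-$t$ path in \SBFS($t$), with $y$ closer to the root $t$. By the heavy-light definition, since $x$ is a light child of $y$, the subtree rooted at $x$ contains strictly fewer than half the nodes of the subtree rooted at $y$. Consequently, each time the path crosses a light segment going from $s$ toward $t$, the size of the ambient subtree at least doubles. Since \SBFS($t$) contains at most $n$ nodes in total, the number of light segments encountered on the $s$-to-$t$ path is at most $\log_2 n = O(\log n)$.

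Next, I would handle the heavy segments. By definition of the heavy-light decomposition, the heavy segments on the $s$-to-$t$ path partition into maximal contiguous runs of heavy segments, each of which by definition lies inside a single heavy chain and therefore forms a heavy subpath. Each such maximal run is either delimited by two consecutive light segments, or by an endpoint ($s$ or $t$) and a light segment. Hence the number of heavy subpaths is at most one more than the number of light segments, which is $O(\log n)$.

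Combining the two bounds, the $s$-to-$t$ path in \SBFS($t$) decomposes into $O(\log n)$ light segments and $O(\log n)$ heavy subpaths, giving the claim. I don't anticipate a real obstacle here since this is a direct application of the classical heavy-light argument; the only thing to be careful about is that \SBFS($t$) is the contracted tree (so edges are segments, not single BFS edges) and we are counting segments, not individual edges — but the doubling argument depends only on subtree sizes in the tree being decomposed, so it goes through verbatim.
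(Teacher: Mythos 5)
Your proof is correct and is exactly the standard Sleator--Tarjan doubling argument that the paper relies on; the paper itself gives no proof, simply citing the lemma as a well-known property of heavy-light decomposition. Your care in noting that the argument depends only on subtree sizes in the contracted tree \SBFS($t$) (so segments rather than BFS edges are counted) is the right observation, and nothing further is needed.
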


Given any source $s \in S$, by Lemma \ref{lem:decomposition},
the path from $t$ to $s$
may contain many heavy subpaths.
Let $C(pq)$ be a heavy
chain that starts at $p$
and ends at $q$ (where $q$ is closer to $t$ than $p$). A $ts$ path may follow
a heavy chain $C(pq)$ but may exit
this chain from a vertex midway, say at $r$. Let $(C(pq), r)$
be a tuple associated with $s$
such that the shortest path from $t$ to $s$ enters this
heavy chain via $q$ and
leaves this chain at $r$. We keep a list $\HE(s,t)$
which contains
all the tuples $(C(pq), r)$ sorted according to
the distance of heavy chain
from $t$ (that is distance $qt$). By Lemma \ref{lem:decomposition},
the size
of $\HE(s,t) = O(\log n)$. Similarly, we have one more
list to store the light segments.
$\LI(s,t)$ contains all the light segments on the $st$ path
again ordered according to their
distance from $t$ in \SBFS($t$). Again by Lemma  \ref{lem:decomposition},
the size
of $\LI(s,t) = O(\log n)$. Note that the size of these additional
two data-structures
is $\sum_{s \in S} O(\log n) = \tilde O(\sigma) = \tilde O(\sqrt{n\sigma})$.

Our main problem was that we have to find the minimum $\TON(\cdot)$ of $O(\sigma)$ segments if 
there is a path of length $\sigma$ between $s$ and $t$. The trick we use here is that
finding minimum on any heavy subpath takes $\tilde O(1)$ time. Since
there are $O(\log n)$ heavy subpaths, the total time taken to find the minimum
on heavy subpaths in $\tilde O(1)$. Also, since the number of light segments is also $O(\log n)$
finding the minimum among these also takes $\tilde O(1)$ time.

We now describe our intuition in detail.  Let $xy$ be a segment in a heavy chain
$C(pq)$. We want to represent $\TON(xy)$ in a
balanced binary search tree $\BST(C)$. To this end, we will add a node with
the tuple $(x.depth,| px \conc \TON(xy) |, |px|)$  in $\BST(C)$.
The first element in this tuple is the depth of $x$
in $\BFS(t)$ --  it also acts as the key in this binary search tree. The second element
is the path $ \TON(xy)$  concatenated with $px$. This concatenation is done so that all  paths
in $\BST(C)$ start from $p$ and comparing two paths in $\BST(C)$ is
possible. The third element will be used to get the path length $\TON(xy)$ (by subtracting it from
the second element) when
need arises. Now we can augment
this tree so that the following range minimum query can be answered in $\tilde O(1)$ time:
$\RMQ(C(pq),[a,b])$ : Find minimum of $\{|px \conc \TON(xy)|\ |\ xy \ \text{is a segment in heavy chain $C(pq)$ and }\ x.depth \ge a.depth \ \text{and}\  x.depth \le b.depth \}$.
The size of $\cup_{C \in \HE(s,t)}\BST(C)$ is $O(\sigma) = O(\sqrt{n\sigma})$ as there are
at most $O(\sigma)$ segments in \SBFS($t$).
\iflong
\begin{algorithm}
\SetKwRepeat{Do}{do}{while}%
  \caption{Finding the shortest replacement path  (in $\TON$) from $s$ to $t$ avoiding $e(u,v)$ }
  \label{fig:findR1R2}
Let $x$ be  the first intersection point on $us$ path\;
$min \leftarrow \infty$ \;

\Do{$x$ is not equal to $s$ }
{
    \If{ $x$ lies in a heavy chain}
    {
        Using $\HE(s,t)$, find $(C(pq),r)$, that is $r$ is the vertex from which $us$
        path leaves the chain $C$\;
        $min\leftarrow \min\{ min, \RMQ(C, [x,r]) - |pr| + |sr| \}$\;
        $x \leftarrow r$\;
    }
    \ElseIf{$x$ is an endpoint of a light segment}
    {
        Let $x'x$ be the  light segment ending at $x$ in \SBFS($t$) \tcp*{Can be found out via $\LI(s,t)$ in $\tilde O(1)$ time.}

        $min \leftarrow \min\{ min, |\TON(x'x)| + |sx'| \}$\;
        $x \leftarrow x'$\;
    }
}
\end{algorithm}
\fi

Given any edge $e(u,v)$ on $st$ path, we can now find the shortest path in $\TON$ from
$s$ to $t$ avoiding $e$
\iflong
(see Algorithm \ref{fig:findR1R2}).
\else
(Please refer to Algorithm \iflong\ref{fig:findR1R2}\else 1\fi~ in the full version of the paper).
\fi We first find the first intersection vertex on the
$us$ path from $u$. Let this vertex be $x$. We will see that finding
$x$ is also not a trivial problem --  we will say more about this problem later.
Now, we will go over all  possible replacement paths from $u$ to $s$.
Thus, we  search if there exists any heavy chain in
$\HE(s,t)$ that contains $x$. To this end, we first check if $x$ lies in some light segment (this can be checked in $\tilde O(1)$ time). If not, then $x$ lies in some heavy chain. We now search each heavy chain in $\HE(s,t)$ to find a node $x'$ with the smallest depth such that  $x'.depth > x.depth$.
Let this node be $x'$. Thus we have found the segment $x'x$ where $x$ is closer to $t$
than $x'$.
We can easily calculate $x.depth$ as
$|st| - |sx|$ or $B_0(s,t) - B_0(s,x)$. Since there are $\tilde O(1)$ heavy
chain in $\HE(s,t)$, the time taken to find if $x'x$ exists in some heavy chain is $\tilde O(1)$.

Assume that we found out that
$x'x \in C(pq)$,  and $ts$ path leaves the chain $C$ at $r$, then we want to
find the shortest replacement path from $r$ to $t$ avoiding $e$. This can be found out via the range minimum query $\RMQ(C(p,q),[x,r])$.
However, note that each replacement path in $C$ starts from $p$. So, we need to
remove $|pr|$ from the replacement path length returned by $\RMQ$ query.
The length $pr$ can be found out in the node $r \in \BST(C)$. Finally, we add $|sr|$ to get the path from $s$ to $t$.

Similarly, we can process a light segment in $O(1)$ time (please refer to Algorithm \iflong\ref{fig:findR1R2}\else
1 in the full version\fi~).
Thus, the time taken by  Algorithm  \iflong\ref{fig:findR1R2}\else
1\fi~ is
$\tilde O(1)$ as the while loop runs at most $O(\log n)$ times  and each step in
the while loop runs in $\tilde O(1)$ time.

\iflong
\else
\vspace{-2mm}
\fi
\subsection{Answering queries in $\tilde O(1)$ time}
\iflong
\else
\vspace{-2mm}
\fi
Given a query ${\sc Q}(s,t,e(u,v))$, we process it as follows
(assuming that $e$ lies on $st_s$ path (that is the {\em far case)} and $v$ is closer to $t$ than $u$)\iflong\\\else\vspace{1mm}\fi

\begin{enumerate}[leftmargin=*,noitemsep,nolistsep]
\item Find the first intersection vertex  on $us$ path.

\iflong
  \noindent As we have mentioned before, this is not a trivial problem.
  Let the first intersection vertex from $u$ to $s$ in $\BST(t)$ be
  denoted by  $\INT_{s}(u,t)$.   We will first show that $\INT_{s}(u,t)$ is independent of $s$.
  \begin{lemma}
  \label{lem:intlemma}
  $\INT_s(u,t) = \INT_{s'}(u,t)$ for $s,s' \in S$ for all $u \in \BFS(t)$.
  \end{lemma}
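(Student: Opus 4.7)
The plan is to show that the first intersection vertex encountered when walking from $u$ into the tree $\BFS(t)$ in the direction of any source $s$ depends only on $u$ and $t$, and not on the chosen $s$. I will split on whether $u$ is itself an intersection vertex of \SBFS$(t)$ or sits strictly inside a segment, and invoke the context of the far case to fix the direction of travel.

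The easy case is $u \in$ \SBFS$(t)$: then $u$ is already an intersection vertex, so $\INT_s(u,t) = u$ for every $s$, and the claim is immediate. The substantive case is when $u \notin$ \SBFS$(t)$, equivalently $u$ has degree exactly $2$ in $\BFS(t)$ and $u \notin \{t\} \cup S$. Then $u$ was contracted, so it lies strictly inside a unique segment $xy$ of \SBFS$(t)$, with $y$ closer to $t$ than $x$ by the paper's convention. Rooting $\BFS(t)$ at $t$, $u$ has exactly one parent (toward $y$) and exactly one child (toward $x$).

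The key observation is that in the algorithmic context, the lemma is invoked when $e(u,v) \in st_s \subseteq st$, so $u$ is an ancestor of $s$ in $\BFS(t)$. Hence $s$ lies in the subtree hanging below $u$, and the $us$ path must begin by descending through the unique child of $u$. Because every internal vertex of the segment $xy$ has degree $2$ in $\BFS(t)$, this descent stays inside the segment and first meets an intersection vertex precisely at $x$. Therefore $\INT_s(u,t) = x$. Applying the same argument to any other source $s' \in S$ whose shortest $s't$ path passes through $u$ gives $\INT_{s'}(u,t) = x$ as well, so the value is a function of $u$ alone and equals the lower endpoint of the segment of \SBFS$(t)$ containing $u$.

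The main obstacle is not any calculation but a conceptual one: one has to realize that the degree-$2$ property of $u$ in $\BFS(t)$ forces a unique downward direction, so despite the appearance that "walking from $u$ toward $s$" should depend on $s$, it actually does not -- all descendants funnel through the same lower segment endpoint $x$. With this in hand, $\INT_s(u,t)$ can be precomputed per vertex $u$ as the lower endpoint of its segment, allowing the query algorithm to identify its starting intersection vertex in $\tilde O(1)$ time without reference to $s$.
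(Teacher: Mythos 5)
Your proof is correct and rests on the same key observation as the paper's: a non-intersection vertex $u$ has degree $2$ in $\BFS(t)$ and hence a unique child, so the walk from $u$ toward any source that passes through $u$ is forced down through that child and through the rest of the degree-$2$ interior of the segment, until it reaches the segment's lower endpoint $x$. The paper packages this as an induction from the leaves to the root, whereas you argue it directly via the segment structure of \SBFS$(t)$; the two are equivalent. One small thing you do better than the paper's write-up: you make explicit that $\INT_s(u,t)$ is only meaningful when $s$ lies in the subtree below $u$ (which is exactly the situation in the far case), since that is what guarantees the unique downward direction. The paper's induction silently assumes this in the step $\INT_s(u,t) = \INT_s(u',t)$ for the unique child $u'$, so stating it is a clarification rather than a departure.
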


  \iflong
  \begin{proof}
    We will prove this by induction on the nodes of $\BFS(t)$ from leaf to root $t$.
    The base case is a leaf in $\BFS(t)$, that is a source vertex, which by definition,
    itself is an intersection vertex. For any node $u$, if $u$ is an intersection vertex,
    $\INT_s(u,t) = \INT_{s'}(u,t) = u$. Else, $u$ is a node of degree 2 in $\BFS(t)$.
    Assume that $u'$ is the child of $u$ in $\BFS(t)$. So, $\INT_s(u,t) = \INT_s(u',t)$ and $\INT_{s'}(u,t) = \INT_{s'}(u',t)$. But by induction hypothesis,  $\INT_s(u',t)=\INT_{s'}(u',t)$.

    \end{proof}
  \fi

  \noindent We will drop the subscript $s$ from the definition of $\INT(\cdot,\cdot)$ as we now know that it is independent of $s$. We use the
  above lemma to construct  two data structures that will help us in finding $\INT(u,t)$.

  \begin{itemize}
  \item $I_1(t)$: For any $u \in V$, if $\INT(u,t)$ is within a distance of $c\sqrt{n/\sigma} \log n$
  (for some constant $c$) from $u$, then we store the tuple
  $(u,\INT(u,t)$) in the balanced binary search tree $I_1(t)$. For any  intersection
  vertex $x \in $ \SBFS$(t)$, we  store at most $\tilde O(\sqrt{n/\sigma})$ tuples in $I_1(t)$.
  For a fixed $t$, the total
  space taken by $I_1(t)$ is $\tilde O(\sigma .\sqrt{(n/\sigma)}
  )= \tilde O(\sqrt{n \sigma})$ (as there are $O(\sigma)$ intersection vertices in \SBFS($t)$.

  \item $I_2(t)$: If $u$ is not present in $I_1(t)$, then $\INT(u,t)$ is at a distance
  $\ge c\sqrt{n /\sigma}\log n$ from $u$. We now
  use a different strategy to find $\INT(u,t)$. We first find $u_s \leftarrow B_1(s,u)$,
  that is the vertex in $\TT$
  closest to  $u$ in $su$ path. With a
high probability,
$u_s$ is closer to $u$ than $\INT(u,t)$ and all the vertices from $u$ to $u_s$ have degree  exactly 2 in $\BFS(t)$.
  Thus, $\INT(u_s,t) = \INT(u,t)$ -- we will now use this property (a similar property was also used in the proof of Lemma \ref{lem:intlemma}).

  \noindent
  For each $x \in \TT$
  such that $x$ is not an intersection vertex in $\BFS(t)$, we store the tuple
  $(x,\INT(x,t))$ in another balanced binary search tree $I_2(t)$. For a fixed vertex $t$, the
  size of this data-structure is $\tilde O(\sqrt{n\sigma})$ space
  as there is only one intersection vertex for each vertex
  in $\TT$ and $|\TT| = \tilde O(\sqrt{n \sigma})$

\end{itemize}
  \noindent If $\INT(u,t)$ is indeed at a distance $\le
c \sqrt{n / \sigma} \log n$,
  then we can use $I_1(t)$  to find it in $\tilde O(1)$
time, else we use $I_2(t)$ to find
  $\INT(u_s,t)$ in $\tilde O(1)$ time.\\

\else
\noindent In the full version of the paper, we show that we can find the first intersection vertex  on $us$ path
in $\tilde O(1)$ time using $O(\sqrt{n\sigma} )$ space.
\fi
\item Find  the replacement path avoiding $u$  if it lies in $\TON$.
\label{fcase1}

\noindent To this end, we use our Algorithm \iflong\ref{fig:findR1R2}\else
1\fi.
The first non-trivial part of this algorithm, that is, finding the first
intersection vertex on the $us$ path has already been tackled in the point above.
So we can find such a replacement path (if it exists) in
$\tilde O(1)$ time  and $\tilde O(\sqrt{n \sigma})$ space.\iflong\\\else\vspace{1mm}\fi

\item Find the replacement path avoiding $e(u,v)$ if it lies in $\TTW$.
\label{fcase2}

\noindent  This part is similar to our data-structure in single source case.
Let $x \leftarrow \INT(u,t)$.  Using $\HE(s,t)$ and $\LI(s,t)$, in $\tilde O(1)$ time,
we can find the segment $xy \in $ \SBFS($t$) such that $y$ is closer to $t$ than $x$.
In this case, we want to check if there exists any
replacement path that starts in  the  same segment in which
$e$ resides. This  replacement path first takes $sx$ path and then takes
 the  detour strictly inside  the  segment $xy$. All such paths are stored in $\TTW(xy)$
with the contiguous range of edges that they avoid on $xy$.
We now just need to check if $u$ and $v$ lie in the range of some replacement path.
To this end, we find $u.depth \leftarrow |st| -|su| = B_0(s,t) - B_0(s,u)$
and $v.depth \leftarrow |st| -|sv| = B_0(s,t) - B_0(s,v)$.
Now we check if $u.depth$ and $v.depth$ lie in contiguous range of
some replacement path  in $\TTW(xy)$. If yes, then we return the length of that
path concatenated with $sx$.  Note that we have already stored $|sx|$ in $B_0(s,x)$.
The time taken in this case is dominated by searching $u$ and $v$ in $\TTW(xy)$,
that is $\tilde O(1)$.\iflong\\\else\vspace{1mm}\fi
\end{enumerate}

\noindent Thus, the total query time of our algorithm is $\tilde O(1)$, and we can return the
minimum of replacement paths found in Step \ref{fcase1} and \ref{fcase2} as our final answer.
The reader can check that the space taken by our algorithm for a
vertex $t$ is $\tilde O(\sqrt{n\sigma})$. Thus the total space taken by our algorithm is
$\tilde O( \sigma^{1/2} n^{3/2}) $. Thus we have proved the main result, that is Theorem \ref{thm:maintheorem} of our paper.


\bibliographystyle{plain}
\bibliography{sample}

\end{document}